\documentclass[11pt,letterpaper]{amsart}

\usepackage[T1]{fontenc} 
\usepackage[utf8]{inputenc} 
\usepackage{amsmath,amssymb}
\usepackage{microtype} 

\usepackage{appendix}
\usepackage{abstract}

\def\dom{\mathrm {dom}\,}
\def\dr{\mathrm {d}}
\newcommand{\eps}{\epsilon}

\newcommand{\bx}{\mathbf{x}}
\newcommand{\bA}{\mathbf{A}}
\newcommand{\LL}{\mathrm{L}}
\newcommand{\CC}{\mathrm{C}}

\newcommand{\R}{\mathbb R}

\newcommand{\Scal}{\mathcal S}
\newcommand{\Hcal}{\mathcal H}

\newcommand{\norm}[1]{\lVert #1 \rVert}

\newtheorem{definition}{\bf Definition}[section]
\newtheorem{proposition}{\bf Proposition}[section]

\newtheorem{lemma}{\bf Lemma}[section]
\newtheorem{theorem}{\bf Theorem}[section]

\newtheorem{remark}{Remark}[section]

\title[Robin Aharonov-Bohm]{Remarks on Regular Approximations to the Robin Aharonov-Bohm Hamiltonian}
\author{C\'esar R. de Oliveira}
\address{C\'esar R.\ de Oliveira: Department of Mathematics, UFSCar, S\~ao Carlos, SP, 13560-970 Brazil}

\begin{document}

\maketitle

\begin{abstract}
In the space $\R^3$, for Robin parameter $L>0$, it is shown that there is a family of Schr\"odinger operators with penetrable toroidal solenoid and variable conductivity that approximates the magnetic Aharonov-Bohm operator with a Robin boundary condition at the solenoid (border). It is also shown that approximations via smooth potentials and then a barrier in the solenoid interior give Dirichlet boundary conditions. The approximations are in the strong resolvent sense and obtained through the $\Gamma$-convergence technique, and they hold for the more general setting of smooth, closed and compact surfaces and continuous and bounded magnetic potentials.
\end{abstract}

\ 

\noindent MSC (2020): primary 35J10, 81Q10. Secondary: 81Q15, 47A55, 35J10.

\

\noindent Keywords: Robin boundary conditions, regular operators, $\Gamma$-convergence, Aharonov-Bohm effect.

\

\section{Introduction}\label{sectIntrod}

There are several reasons why approximating Schrödinger operators with singular potentials or boundary conditions by operators with regular potentials is important. First, since special models often present idealizations, obtaining them as limits of `physical' Hamiltonians is conceptually significant; this justifies the use of such models and clarifies their physical interpretation. Furthermore, such approximations are often more advantageous for numerical analysis, where singular interactions can be difficult to treat directly.

The case of the magnetic Aharonov-Bohm (AB) effect, with Dirichlet boundary condition (bc) at the solenoids, was obtained as limits of penetrable solenoids and regular potentials that diverge to $+\infty$ in the solenoid interior, as discussed in \cite{Kretz1965,MVGris, deOP2008, deOP2011}. This was also intended as a way to justify the use of the Dirichlet boundary condition~\cite{deOP2008, deOP2011} in this context.

From the mathematical physics perspective, the main goal here is to discuss approximations of the magnetic AB effect with toroidal solenoids in the space, but with some Robin boundary conditions, by a family of regular operators well defined in the whole~$\R^3$. In fact, the results hold for more general closed and compact surfaces~$\mathcal S$ with nonempty interior, which could be solenoids or not, and continuous and bounded magnetic potentials.Two kinds of approximations will be considered: 
\begin{enumerate}
\item A sequence of smooth profiles that impose Robin boundary conditions on~$\Scal$, followed by a sequence of regular potentials to turn the surface impenetrable; the final result is always the presence of Dirichlet boundary conditions (see Subsection~\ref{subsecSmootApprSequen}). 
\item The second proposal also uses this diverging sequence of regular potentials in the interior~$\Scal^\circ$ of~$\Scal$ and considers a suitable sequence of variable conductivity, resulting in an impenetrable solenoid with Robin boundary condition, but only for the Robin parameter $L>0$ (see Subsection~\ref{subsectConductLaySequ}). Recall that for such surface~$\Scal$ to be considered a solenoid the space $\R^3\setminus \Scal^\circ$ cannot be simply connected.
\end{enumerate}

Next we present the more general surfaces (``solenoids'') of interest, magnetic potentials and the corresponding quadratic forms and/or operators. Let~$\Scal \subset \R^3$ be a smooth, compact, closed and oriented surface separating $\R^3$ into an interior (open and bounded) domain~$\Scal^\circ$ and an exterior (open) domain~$\Scal'$. Let $\nu(p)$ denote the unit normal vector at $p \in\Scal$ pointing towards~$\Scal'$. The surface may have many (finitely many) components; e.g., it can be a collection of disjoint tori. The (real) magnetic potential $\bA(\bx)$ is continuous and bounded (and smooth on~$\R^3\setminus\Scal$). 

In the AB context, $\Scal$ is a solenoid, $\bA$ is usually tangent to the surface~$\Scal$ and the magnetic field $\mathbf B(\bx)=\mathrm{rot}\, \bA(\bx)=0$ for $\bx\in\Scal'$, but we do not need these assumptions for the general results. Furthermore, the magnetic field is not necessarily zero in~$\Scal^\circ$, but the solenoid~$\Scal$ is impenetrable, i.e., its interior is excluded from the final Hamiltonian operators and dynamics. So the AB dynamics occur in~$\Scal'$ and it is not at all clear what is the boundary condition on the surface (solenoid)~$\Scal$. Natural possibilities are Dirichlet, Neumann and Robin bc.

Significant controversy has existed over whether potentials are real physical variables, especially concerning the presence of~$\bA$ in the Schr\"odinger operator within field-free regions created by impenetrable solenoids. This controversy was largely settled by experimental confirmations of the AB effect under conditions of essentially perfect magnetic shielding~\cite{TonomuraEtAl1982, Tonomura1986}, and more recently by  macroscopic realizations of the effect~\cite{BatelaanCaprez,BeckerBatelaan}. In this context, approximations starting from penetrable solenoids and a shielding process become relevant. Here, we aim to provide a physical justification for the use of the Robin boundary condition at the solenoid~$\Scal$.

The Robin case, with parameter $L\in\R$, is described by the quadratic form
\begin{equation}\label{eqDefQFRobin}
b^{(L)}(u) = \begin{cases} \displaystyle \int_{\Scal'} |(\nabla+i\bA)u|^2\,\dr \bx + L \int_\Scal |\gamma_D(u)|^2\,\dr \sigma, & u\in\Hcal^1(\Scal'), \\ +\infty\,, & u\in\LL^2(\Scal')\setminus \Hcal^1(\Scal'), \end{cases}
\end{equation}
where $\dr \sigma$ is the area element of~$\Scal$ (i.e., the standard surface measure given by the first fundamental form of the surface). We denote by $\gamma_D:\mathcal {\mathcal H}^1(\Scal')\to \mathcal H^{1/2}(\Scal)$ and $\gamma_N:\mathcal {\mathcal H}^2(\Scal')\to \mathcal H^{1/2}(\Scal)$, with $\gamma_N(u)=\nu\cdot \gamma_D (\nabla u)$, the Dirichlet and Neumann traces, respectively~\cite{McLean2000} (here, the Neumann trace is evaluated on the operator domains, where functions possess sufficient regularity).

The associated self-adjoint operator is the magnetic Robin Laplacian
 \begin{equation}\label{eqHLRobi}
 H^{(L)} = -(\nabla + i\bA)^2
 \end{equation}
 acting on the domain, with boundary conditions in the sense of traces,
 \[
 \dom H^{(L)} = \left\{ u \in \Hcal^2(\Scal') \mid \nu \cdot (\nabla + i\bA)u = L u \text{ on } \Scal \right\}.
 \]
 For the reader's convenience, Appendix~\ref{appActHL} shows how this operator is obtained from the form~\eqref{eqDefQFRobin}. If $\bA(\bx)$ is tangent to the surface, this bc is simplified to
\[
\gamma_N(u)=\nu\cdot \nabla u = L u=L\gamma_D(u) \text{ on } \Scal.
\]
The case $L=0$ corresponds to the Neumann bc. We will not always explicitly denote the traces by $\gamma_D$ or $\gamma_N$, consistent with the notation used above for the domain of~$H^{(L)}$.

To properly describe our approximations by regular operators and results, we anticipate some geometric preliminaries. For $\delta > 0$ sufficiently small, the tubular neighborhood
\[
\mathcal{N}_\delta = \{ \bx \in \R^3 \mid \mathrm{dist}(\bx,\Scal) < \delta \}
\]
of~$\Scal$ is diffeomorphic to the product~$\Scal \times (-\delta, \delta)$ via the map
\[
\Phi:\Scal \times (-\delta, \delta) \to \mathcal{N}_\delta, \quad \Phi(p, t) = p + t\nu(p).
\]
Here, $t$ represents the signed distance to~$\Scal $ ($t>0$ in~$\Scal'$, $t<0$ in ${\Scal^\circ}$), and this parameter characterizes the points in such neighborhoods; for example, the layer $\mathcal{S}_{\delta} = \{0 < t < \delta\}$ means
\begin{equation}\label{eqLayerSdelta}
\mathcal{S}_{\delta} = \{\bx=p+t\nu(p)\mid p\in\Scal, 0 < t < \delta\}.
\end{equation}

By a slight abuse of notation, for a function $u(\bx)$ defined in $\mathcal{N}_\delta$, we will frequently identify it with its pullback $u \circ \Phi$ and simply write $u(p, t)$ to denote the evaluation $u(p + t\nu(p))$, defining trace operators $\gamma_t(u) := u(\cdot, t)$.

The Euclidean volume element $\dr\bx $ transforms as
\[
\dr\bx = J(p, t)\,\dr \sigma(p)\,\dr t,
\]
with Jacobian $J(p, t) = \det(I + t \,\dr\nu(p))$. Since~$\Scal $ is smooth and compact, the Jacobian satisfies the uniform in~$p\in\Scal$ estimate
\begin{equation} \label{eq:jacobian}
1 - C|t| \le J(p, t) \le 1 + C|t| \quad \implies \quad J(p,t) = 1 + O(t),
\end{equation}
for $|t| < \delta$.

The gradient operator $\nabla$ in these coordinates decomposes into a normal component and a tangential component, 
\[
|\nabla u|^2 = |\partial_t u|^2 + |\nabla_\tau u|^2,
\]
where $\nabla_\tau$ is the tangential derivatives on the parallel surface $\Scal_t = \{\bx \mid \mathrm{dist}(\bx,S)=t\}$. This equality is exact for~$t=0$, and for small~$|t|$, the tangential component approximates the surface gradient on~$\Scal$ uniformly.

We introduce two notations: $\chi_{\Omega}(\bx)$ will denote the characteristic function of the set~$\Omega$, and $R_\mu(T)=(T-\mu\mathbf 1)^{-1}$ denotes the resolvent operator of the self-adjoint operator~$T$ at the scalar point~$\mu$ in its resolvent set.
\begin{remark}
This work is explicitly written for dimension three, but the mathematical arguments hold for any dimension~$N \ge 2$, provided~$\Scal$ is a  closed compact smooth hypersurface (codimension~1).
\end{remark}

\section{Approximations and results}

\subsection{Smooth approximating sequence}\label{subsecSmootApprSequen}
In this first approximating approach, we initially do not introduce a mechanism to restrict the limit function~$u$ to live solely in $\Hcal^1(\Scal')$. The Robin condition arises from a $\delta$-potential supported on~$\Scal$. In a second step we include a shielding process of the interior region.

Let $\eps_n \downarrow 0$ be a sequence of widths. Let $\varphi \in \CC_c^\infty((-1,1))$ be a profile function satisfying
\[
\varphi \ge 0, \quad \int_{-1}^1 \varphi(s)\, \dr s = 1.
\]
Define the smooth potential supported in the layer $|t| < \eps_n$ by
\[
W_{n,L}(\bx) = \begin{cases} \frac{L}{\eps_n} \varphi\left(\frac{t}{\eps_n}\right), & \bx \in \mathcal{N}_{\eps_n}, \\ 0, & \text{otherwise}. \end{cases}
\]
The sequence of associated quadratic forms $q_n^{(L)}: \LL^2(\R^3) \to [0, \infty]$ is defined by
\[
q_n^{(L)}(u) = \begin{cases} \displaystyle \int_{\R^3} |(\nabla+i\bA)u|^2\,\dr \bx + \int_{\R^3} W_{n,L} |u|^2\,\dr \bx, & u \in \Hcal^1(\R^3), \\ +\infty\,, & \text{otherwise}. \end{cases}
\] The associated self-adjoint operator is
\[
h_n^{(L)}u = -(\nabla + i\bA)^2u + W_{n,L}(\bx)u,\quad \dom h_n^{(L)}=\Hcal^2(\R^3).
\]

\begin{proposition}\label{prophLconvh}
The operator sequence $h_n^{(L)}$ converges, in the strong resolvent sense, to the operator
\[
h^{(L)} u = -(\nabla + i\mathbf{A})^2 u, \quad \text{in } \R^3 \setminus \Scal,
\]
with domain
\[
\dom h^{(L)} = \big\{ u \in \Hcal^1(\R^3) \cap \Hcal^2(\R^3 \setminus \Scal) \mid \left[ \nu \cdot(\nabla+i \mathbf{A}) u \right]_{\Scal} = L \gamma_D(u)\text{ on }\Scal \big\},
\]
where $\left[ \nu \cdot(\nabla+i \mathbf{A}) u \right]_{\Scal}$ denotes the jump of the magnetic normal derivative across the surface~$\Scal$ (a Robin-type bc).
 Equivalently,
\[
\lim_{n\to\infty}R_i(h_n^{(L)})u\to R_i(h^{(L)})u,\quad \text{for all } u\in \LL^2(\R^3).
\]
\end{proposition}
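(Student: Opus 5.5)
The plan is to prove the proposition by $\Gamma$-convergence of the quadratic forms. The forms $q_n^{(L)}$ are not uniformly bounded below when $L<0$, so the first step is a uniform lower bound. Using the trace-type inequality
\[
\int_{\mathcal N_{\eps}} \frac{1}{\eps}\varphi\!\left(\frac{t}{\eps}\right)|u|^2\,\dr\bx \le \eta \norm{\nabla|u|}^2 + C_\eta \norm{u}^2,
\]
uniform in small $\eps$, together with the diamagnetic inequality $|\nabla|u||\le|(\nabla+i\bA)u|$, we get $q_n^{(L)}(u)\ge \tfrac12\norm{(\nabla+i\bA)u}^2 - c\norm{u}^2$ with $c$ independent of $n$. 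The inequality itself comes from writing, in the coordinates $\Phi$,
\[
|u(p,t)|^2 \le 2\,|u(p,s)|^2 + 2\,\delta_0\int_{-\delta_0}^{\delta_0}|\partial_t u|^2\,\dr t
\]
for a fixed $\delta_0$, averaging over $s$, and using \eqref{eq:jacobian}. After shifting by $c$, all forms are nonnegative and equicoercive in $\Hcal^1$ on sublevel sets. We can then invoke the standard equivalence, for uniformly bounded below closed forms, between Mosco (strong-weak $\Gamma$) convergence and strong resolvent convergence (Dal Maso / Mosco theorem); this also gives the statement at $\mu=i$.

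The limit form is $q^{(L)}(u)=\int_{\R^3}|(\nabla+i\bA)u|^2\,\dr\bx + L\int_\Scal|\gamma_D u|^2\,\dr\sigma$ on $\Hcal^1(\R^3)$, and $+\infty$ otherwise.

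The key lemma is the following. If $u_n\rightharpoonup u$ weakly in $\Hcal^1(\R^3)$, then
\[
\int W_{n,L}|u_n|^2\,\dr\bx \to L\int_\Scal |\gamma_D u|^2\,\dr\sigma .
\]
To prove it, write the left side as $L\int_{-1}^1\varphi(s)\int_\Scal |u_n(p,\eps_n s)|^2 J(p,\eps_n s)\,\dr\sigma\,\dr s$. Use the estimate
\[
\big\|\gamma_{t}u_n-\gamma_0 u_n\big\|_{\LL^2(\Scal)}\le |t|^{1/2}\,\norm{\partial_t u_n}_{\LL^2(\mathcal N_\delta)},
\]
together with $J=1+O(t)$, to reduce to $\int_\Scal|\gamma_0 u_n|^2\,\dr\sigma$. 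Finally, use compactness of the trace $\Hcal^1_{loc}\to\LL^2(\Scal)$ (since $\Hcal^1\to\Hcal^{1/2}(\Scal)\hookrightarrow \LL^2(\Scal)$ compactly on the compact surface $\Scal$) to pass to $\int_\Scal|\gamma_D u|^2\,\dr\sigma$.

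With the lemma in hand, the two $\Gamma$-convergence conditions follow.

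\emph{Liminf inequality.} If $u_n\to u$ in $\LL^2$ and $\liminf q_n^{(L)}(u_n)<\infty$, the uniform bound gives boundedness in $\Hcal^1$ along a subsequence. Hence $u_n\rightharpoonup u$ in $\Hcal^1$. Weak lower semicontinuity of $\int|(\nabla+i\bA)u|^2$ (note $\bA$ is bounded) combined with the key lemma gives the liminf inequality. The same argument handles weakly convergent sequences, as Mosco requires.

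\emph{Recovery sequence.} Take $u_n=u$ constant; the key lemma with strong convergence gives $q_n^{(L)}(u)\to q^{(L)}(u)$.

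It remains to identify the operator of $q^{(L)}$. Integrate by parts separately on $\Scal^\circ$ and $\Scal'$, as in Appendix~\ref{appActHL}. Elliptic regularity on each side gives $u\in\Hcal^2(\R^3\setminus\Scal)$, and the boundary terms produce the transmission condition: the jump of the magnetic normal derivative equals $L\gamma_D u$, with the sign convention fixed by $\nu$ pointing into $\Scal'$. Continuity of $u$ across $\Scal$ is encoded in $u\in\Hcal^1(\R^3)$.

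The main obstacle is the uniform lower bound for $L<0$, i.e., the $\eps$-uniform trace inequality in the thin layer, together with the key lemma under only weak $\Hcal^1$ convergence. Both rest on the one-dimensional fundamental-theorem estimate in $t$ combined with the Jacobian control \eqref{eq:jacobian}, so I would establish that estimate carefully first.
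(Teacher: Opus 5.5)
Your proposal is correct and follows essentially the paper's route: combined $\Gamma$-convergence via Theorem~\ref{teorGammaConvTeor}, the normal-direction fundamental-theorem estimate in the thin layer together with compactness of $\gamma_D$ into $\LL^2(\Scal)$ for the potential term, the constant recovery sequence, and integration by parts on both sides of $\Scal$ (as in Appendix~\ref{appActOphL}) to identify $h^{(L)}$. Your $\eps$-uniform lower bound is a genuine improvement: the paper's $(\Gamma1)$ argument bounds the kinetic energy by using $W_{n,L}\ge 0$, so it only covers $L\ge 0$, whereas your estimate also handles $L<0$ and supplies the uniform lower boundedness that Theorem~\ref{teorGammaConvTeor} requires.
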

The quadratic form~$q^{(L)}$ associated with this limit operator~$h^{(L)}$ is
 \[
 q^{(L)}(u, v) = \int_{\R^3} \overline{(\nabla + i\mathbf{A})u} \cdot {(\nabla + i\mathbf{A})v}\, \dr\bx + L \int_{\Scal} \overline{u} {v}\, \dr\sigma,
 \] with $\dom q^{(L)}=\mathcal H^1(\R^3)$.
Details appear in Appendix~\ref{appActOphL}. 

A more general framework for approximating Laplace operators with Robin boundary conditions and Schr\"odinger operators with $\delta$-interactions is developed in~\cite{Holzmann2026}; see also~\cite{LotorPost} and the references therein.

Thinking of the AB effect, in particular when~$\Scal$ is a (finite) collection of tori, we try to turn the surface impenetrable by adding to $q^{(L)}$, and so to $h^{(L)}$, a sequence of diverging potentials in the surface interior. Consider then the sequence of forms defined on $\Hcal^1(\R^3)$ by
\[
\tilde{q}_m^{(L)}(u) = q^{(L)}(u) + m \int_{{\Scal^\circ}} |u|^2\,\dr \bx, 
\]
and take the limit $m \to \infty.$ The self-adjoint operator associated with $\tilde{q}_m^{(L)}$ is $\tilde h_m^{(L)}=h^{(L)}+ m\chi_{\Scal^\circ}(\bx)$. 

The magnetic operator in~$\Scal'$ with Dirichlet boundary condition is
\begin{equation}\label{eqDirichlOpera}
Du=-(\nabla + i\mathbf{A})^2 u, \quad u\in\dom D = \Hcal^2(\Scal')\cap \Hcal_0^1(\Scal').
\end{equation}
The elements of its domain are supposed to vanish in~$\Scal^\circ$. Let $P_0$ be the projection $ \LL^2(\mathbb R^3)\mapsto \LL^2(\Scal')$. Independently of the value of the Robin parameter~$L$, one has

\begin{proposition}\label{proptildehmDiric}
The sequence of operators $\tilde h_m^{(L)}$ converges in the strong resolvent sense to~$D$ in $\LL^2(\Scal')\subset \LL^2(\mathbb R^3)$, that is,
\[
\lim_{m\to\infty} R_i(\tilde h_m^{(L)})u=R_i(D)P_0u,\quad \text{for all }u\in\LL^2(\mathbb R^3).
\]
\end{proposition}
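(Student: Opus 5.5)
The plan is to work at the level of quadratic forms and use the monotone convergence theorem for nondecreasing sequences of closed forms (Kato, Simon), which the paper presumably frames as $\Gamma$-convergence. For $L\ge 0$ the forms $\tilde q_m^{(L)}$ are nonnegative and increase with $m$. For $L<0$ we first shift by a constant. The trace inequality $\int_\Scal|u|^2\,\dr\sigma\le \eta\norm{\nabla|u|}^2+C_\eta\norm{u}^2$ holds on $\Hcal^1(\R^3)$. Combined with the diamagnetic inequality $|\nabla|u||\le|(\nabla+i\bA)u|$, it shows that $q^{(L)}+c\norm{u}^2$ is bounded below by $\tfrac12\int|(\nabla+i\bA)u|^2$ for a suitable $c>0$. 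So $\tilde q_m^{(L)}+c$ is a nondecreasing sequence of closed forms bounded below uniformly in $m$. Strong resolvent convergence at the point $-c-1$ then transfers to $\mu=i$ by the standard argument, since the limit operators are all bounded below by $-c$.

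By the monotone convergence theorem, the limit form $\tilde q_\infty(u)=\sup_m \tilde q_m^{(L)}(u)$ has domain $\{u\in\Hcal^1(\R^3): \sup_m \tilde q_m^{(L)}(u)<\infty\}$. This is exactly $\{u\in\Hcal^1(\R^3): u=0 \text{ a.e. in } \Scal^\circ\}$. On it the form equals $q^{(L)}(u)$, because the penalty term vanishes. The limit form is closed. The theorem then states that the resolvents converge strongly to the resolvent of the (possibly non-densely defined) operator associated with $\tilde q_\infty$, understood as zero on the orthogonal complement of the closure of the form domain.

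The key step is to identify this limit form with the Dirichlet form of $D$. Take $u\in\Hcal^1(\R^3)$ with $u=0$ a.e. on $\Scal^\circ$. Since $\Scal$ is smooth, the traces from the two sides coincide. The interior trace of $u|_{\Scal^\circ}=0$ is zero, so $\gamma_D(u|_{\Scal'})=0$ and $u|_{\Scal'}\in\Hcal^1_0(\Scal')$. Conversely, extending any $v\in\Hcal_0^1(\Scal')$ by zero gives an element of $\Hcal^1(\R^3)$. This uses the smoothness of $\Scal$ and the fact that $C_c^\infty(\Scal')$ is dense. Hence the form domain is $\Hcal^1_0(\Scal')$, viewed inside $\LL^2(\R^3)$. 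The boundary term $L\int_\Scal|u|^2\,\dr\sigma$ vanishes, and $\int_{\Scal^\circ}|(\nabla+i\bA)u|^2=0$. So $\tilde q_\infty(u)=\int_{\Scal'}|(\nabla+i\bA)u|^2\,\dr\bx$, which is the closed form of the Dirichlet magnetic Laplacian. Its closure of form domain is $\LL^2(\Scal')$, and elliptic regularity with bounded continuous $\bA$ (smooth near the boundary as assumed) gives the operator domain $\Hcal^2(\Scal')\cap\Hcal^1_0(\Scal')$ of~\eqref{eqDirichlOpera}. The limiting resolvent is therefore $R(D)P_0$, and we conclude $R_i(\tilde h_m^{(L)})u\to R_i(D)P_0u$.

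I expect the main obstacle to be the identification of the form domain. One part is the trace argument showing that vanishing a.e. inside forces a zero boundary trace on $\Scal'$. The other part is the case $L<0$, where the loss of monotone nonnegativity has to be handled by the uniform lower bound above. An alternative to the monotone convergence theorem is to prove $\Gamma$-convergence directly. For the liminf inequality, uniformly bounded forms give a weak $\Hcal^1$ limit with $\int_{\Scal^\circ}|u|^2\le C/m\to0$, and the norm and trace terms are weakly lower semicontinuous or compact. For recovery sequences, take the constant sequence $u_m=u$. Together with the equicoercivity just established, this yields the strong resolvent convergence.
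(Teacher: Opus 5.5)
Your proof is correct, but your main route differs from the paper's.

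\textbf{The paper's route.} It verifies the combined $\Gamma$-convergence of $\tilde q_m^{(L)}$ to $q_D$ directly and then applies Theorem~\ref{teorGammaConvTeor}. In $(\Gamma1)$ it extracts a weak $\Hcal^1(\R^3)$ limit and gets $u=0$ on $\Scal^\circ$ from the penalty. It then uses compactness of the trace into $\LL^2(\Scal)$ to show that the Robin term tends to $L\int_\Scal|\gamma_D(u)|^2\,\dr\sigma=0$, whatever the sign of $L$, and finishes with weak lower semicontinuity of the kinetic term. In $(\Gamma2)$ it takes the constant sequence. That is exactly the alternative sketched in your last paragraph.

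\textbf{Your route and what it buys.} Your primary argument uses the fact that $m\mapsto\tilde q_m^{(L)}(u)$ is nondecreasing. You then invoke the monotone convergence theorem for closed forms, in Simon's version with a possibly non-densely defined limit. Closedness of the limit form and its domain $\{u\in\Hcal^1(\R^3): u=0 \text{ a.e. on } \Scal^\circ\}$ then come for free. No weakly convergent sequences, recovery sequences or trace compactness are needed, and the Robin term disappears simply because every element of the limit domain has zero trace.

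Your explicit bound $q^{(L)}(u)+c\norm{u}^2\ge\frac12\norm{(\nabla+i\bA)u}^2$ for $L<0$ also makes explicit a step the paper leaves implicit. The paper's $(\Gamma1)$ asserts that bounded energies give bounded kinetic energy, which for $L<0$ needs precisely this trace/diamagnetic estimate. Theorem~\ref{teorGammaConvTeor} also presupposes the uniform lower bound you supply.

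\textbf{What the paper's route buys.} It is uniform in method. $\Gamma$-convergence does not require monotonicity, so the same tool handles the non-monotone families of Theorem~\ref{thmLimRobinMag} and Proposition~\ref{prophLconvh}. It also covers the simultaneous limit $h_n^{(L)}+n\chi_{\Scal^\circ}$ mentioned after Proposition~\ref{proptildehmDiric}. Your argument is, in effect, the special case in which an increasing sequence of lower semicontinuous forms $\Gamma$-converges to its supremum.

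\textbf{One small overstatement.} Identifying the operator of $q_D$ as having domain $\Hcal^2(\Scal')\cap\Hcal^1_0(\Scal')$ by elliptic regularity needs some control of $\mathrm{div}\,\bA$, e.g.\ boundedness, not merely a bounded continuous $\bA$. This concerns the definition~\eqref{eqDirichlOpera} of $D$ rather than your convergence argument, and the paper takes the same identification for granted.
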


The proofs of the above two propositions are the subjects of Section~\ref{sectProsfsProps}.

\begin{remark}
The same limit $D$ in Proposition~\ref{proptildehmDiric} is obtained if instead of the operator sequence~$\tilde h_m^{(L)}$ one simultaneously includes the shielding process, that is, if one considers 
\[
h_n^{(L)}u + n\chi_{\Scal^\circ}(\bx)u = -(\nabla + i\bA)^2u + W_{n,L}(\bx)u+n\chi_{\Scal^\circ}(\bx)u,
\] as $n\to\infty$. Also here, the shielding process of the interior region~$\Scal^\circ$ results in the Dirichlet operator~$D$. This proof does not differ too much from the ones presented below and will not be discussed here.
\end{remark}

Consequently, the Robin interaction vanishes due to the continuity constraint and the hard interior obstacle, yielding the Dirichlet bc. Achieving a Robin condition with a hard obstacle ($u=0$ inside, $u \neq 0$ outside) requires $u$ to exhibit a jump discontinuity at $\mathcal{S}$, which results in infinite kinetic energy ($\int|\nabla u|^2=\infty$). To maintain regular approximations without introducing a renormalization term, we consider a variable conductivity that diminishes near the surface to suppress the gradient energy. We present this construction next.

\subsection{Conductivity layer approximating sequence}\label{subsectConductLaySequ}

This (regular) approach is physically distinct from the first one. Besides enforcing a hard obstacle $n\chi_{\Scal^\circ}(\bx)$ in the interior ${\Scal^\circ}$, we will impose in the layer $\mathcal{S}_{\eps_n} = \{0 < t < \eps_n\}$ (see~\eqref{eqLayerSdelta}) a nonconstant conductivity $ a_{n,L}(\bx)$, and suitably small in the layer and $L>0$, given by
\[
 a_{n,L}(\bx) = \begin{cases}
L\eps_n, & \bx \in \mathcal{S}_{\eps_n} , \\
1, & \bx \in \R^3\setminus \mathcal{S}_{\eps_n}.
\end{cases}
\]
The sequence of corresponding forms is 
\[
b_n^{(L)}(u) = \int_{\R^3}  a_{n,L}(\bx) |(\nabla+i\bA)u|^2\,\dr \bx + n \int_{{\Scal^\circ}} |u|^2\,\dr \bx, \quad u\in \Hcal^1(\R^3),
\]
and $b_n^{(L)}(u)=+\infty$ if $u\in\LL^2(\R^3)\setminus\Hcal^1(\R^3)$.
The idea is that the small factor $L\eps_n$ will compensate the large values of the gradient in the layer. Denote by $H_n^{(L)}$ the self-adjoint operator associated with~$b_n^{(L)}$, whose action is
\[
H_n^{(L)}u=-(\nabla+i \mathbf{A}) \cdot\big( a_{n,L}(\bx)(\nabla+i \mathbf{A}) u\big)+n \chi_{\mathcal{S}^{\circ}}(\mathbf{x}) u
\] with domain 
\[
\dom H_n^{(L)} =\big\{u\in\Hcal^1(\R^3)\mid H_n^{(L)}u\in\LL^2(\R^3) \big\}.
\]
According to Theorem~\ref{thmLimRobinMag}, in the limit $n\to\infty$, this approach gives the magnetic Robin Hamiltonian~\eqref{eqHLRobi}, with parameter $L>0$; recall that $L=0$ corresponds to Neumann bc and that $P_0$ is the projection $ \LL^2(\mathbb R^3)\mapsto \LL^2(\Scal')$.

\begin{theorem}\label{thmLimRobinMag}
Suppose that $L>0$. The sequence of operators $ H_n^{(L)}$ converges, in the strong resolvent sense, to the magnetic Robin Laplacian~$H^{(L)}$ in $\LL^2(\Scal')\subset \LL^2(\mathbb R^3)$ (see~\eqref{eqHLRobi}), that is,
\[
\lim_{n\to\infty} R_i( H_n^{(L)})u=R_i(H^{(L)})P_0u,\quad \text{for all }u\in\LL^2(\mathbb R^3).
\]
\end{theorem}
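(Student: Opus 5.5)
The plan is to use $\Gamma$-convergence of the quadratic forms, since the abstract and Proposition~\ref{proptildehmDiric} indicate that this is the paper's tool. For nonnegative closed forms, Mosco convergence $b_n^{(L)}\to b^{(L)}$ in $\LL^2(\R^3)$ (lower bound along weakly convergent sequences, and recovery sequences converging strongly) is equivalent to strong resolvent convergence of the associated operators in the generalized sense. Here $b^{(L)}$ is extended by $+\infty$ to functions not vanishing a.e.\ on $\Scal^\circ$, and its operator is $H^{(L)}\oplus 0$ on the closure of its domain, $\LL^2(\Scal')$; this gives exactly $R_i(H_n^{(L)})\to R_i(H^{(L)})P_0$. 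Note that $b^{(L)}\ge 0$ because $L>0$, so all forms are uniformly bounded below. Equivalently, I could prove $\Gamma$-convergence together with equicoercivity, which is easy here: bounded $b_n$-energy gives bounded $\int_{\R^3\setminus\mathcal S_{\eps_n}}|(\nabla+i\bA)u|^2$, and $\bA$ is bounded.

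\emph{Liminf inequality.} Let $u_n\rightharpoonup u$ in $\LL^2$ with $\liminf b_n^{(L)}(u_n)<\infty$, and pass to a subsequence realizing the liminf. From $n\int_{\Scal^\circ}|u_n|^2\le C$ we get $u=0$ on $\Scal^\circ$. Fix $\delta>0$. On $\Scal'\setminus\overline{\mathcal S_\delta}$ the gradients are eventually bounded in $\LL^2$, so by weak lower semicontinuity $\int_{\Scal'\setminus \mathcal S_\delta}|(\nabla+i\bA)u|^2\le\liminf\int_{\R^3\setminus\mathcal S_{\eps_n}}|(\nabla+i\bA)u_n|^2$. Letting $\delta\to0$ gives $u\in\Hcal^1(\Scal')$. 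The core step is the layer estimate. Set $v_n=e^{i\theta}u_n$, or simply use $|\partial_t|u_n||\le|(\nabla+i\bA)u_n|$ (diamagnetic inequality). Since $u_n\in\Hcal^1$ and $u_n$ is small on $\Scal^\circ$, we have $|u_n(p,\eps_n)|\le |u_n(p,0)|+\int_0^{\eps_n}|\partial_t|u_n||\,\dr t$. By Cauchy--Schwarz and the bound \eqref{eq:jacobian},
\[
L\eps_n\int_{\mathcal S_{\eps_n}}|(\nabla+i\bA)u_n|^2\,\dr\bx\ge (1-C\eps_n)L\int_\Scal\Big(\int_0^{\eps_n}|\partial_t|u_n||\,\dr t\Big)^2\dr\sigma\ge (1-C\eps_n)L\int_\Scal\big(|\gamma_{\eps_n}u_n|-|\gamma_0u_n|\big)_+^2\,\dr\sigma.
\]
It remains to show that $\gamma_0 u_n\to0$ in $\LL^2(\Scal)$ and $\liminf\int_\Scal|\gamma_{\eps_n}u_n|^2\ge\int_\Scal|\gamma_D u|^2$. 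For the first, use the interior: $u_n$ has bounded gradient energy on $\Scal^\circ$ and $\norm{u_n}_{\LL^2(\Scal^\circ)}=O(n^{-1/2})$, so interpolation gives $\norm{\gamma_0u_n}^2\lesssim \norm{u_n}\norm{u_n}_{\Hcal^1}\to0$. For the second, apply the same one-dimensional argument to the traces $\gamma_{\eps_n}u_n$ viewed from the outer region $\{t>\eps_n\}$, after straightening the region by the shift $t\mapsto t-\eps_n$. This gives functions bounded in $\Hcal^1$ of a fixed collar that converge weakly to $u$, so their traces converge strongly in $\LL^2(\Scal)$ by compactness of the trace map. Combining these gives $\liminf b_n^{(L)}(u_n)\ge b^{(L)}(u)$. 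I expect this step, namely controlling the traces at both ends of the shrinking layer uniformly, to be the main technical obstacle.

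\emph{Recovery sequence.} By density, and since $b^{(L)}$ is continuous on $\Hcal^1(\Scal')$, it suffices to treat $u\in \CC^\infty$ up to $\Scal$ from outside, extended by zero inside. Define $u_n(p,t)=u(p,\eps_n)\,t/\eps_n$ for $0<t<\eps_n$, $u_n=u(\cdot-\text{shift})$ for $t\ge\eps_n$ (or simply $u_n=u$ there, after compressing $u$ by $t\mapsto t-\eps_n$ on a fixed collar), and $u_n=0$ in $\Scal^\circ$. Then $u_n\in\Hcal^1(\R^3)$, $u_n\to u$ in $\LL^2$, and the penalty term vanishes. The layer energy is $L\eps_n\int_\Scal\int_0^{\eps_n}(|u(p,\eps_n)|^2\eps_n^{-2}+O(1))J\,\dr t\,\dr\sigma\to L\int_\Scal|\gamma_Du|^2\,\dr\sigma$. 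The outer energy converges to $\int_{\Scal'}|(\nabla+i\bA)u|^2$ because the compression map tends to the identity in $\CC^1$. Hence $b_n^{(L)}(u_n)\to b^{(L)}(u)$. Finally, Appendix~\ref{appActHL} identifies the operator of $b^{(L)}$ as $H^{(L)}$, which completes the argument.
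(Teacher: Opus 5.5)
Your argument is correct and has the same architecture as the paper's proof. Mosco convergence is exactly the combined $\Gamma$-convergence of Definition~\ref{defCombGammaConve}, and your recovery sequence $\frac{t}{\eps_n}\gamma_{\eps_n}(u)$ in the layer, followed by density, is the paper's.

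The genuine difference is in the core lower bound on the layer. The paper works with $u_n$ itself: it writes $\partial_t u_n=g_n-iA_\nu u_n$ and absorbs the magnetic term via Lemma~\ref{lemmaabdeelta} with $\delta_n=\sqrt{\eps_n}$. This costs an error $O(\sqrt{\eps_n})$ controlled by $\norm{\bA}_\infty$ and $\sup_n\norm{u_n}$. Your diamagnetic bound $|\partial_t|u_n||\le|(\nabla+i\bA)u_n|$ gives the estimate with no error term and no use of $\norm{\bA}_\infty$ inside the layer. The price is that you only see $|u_n|$, obtaining $(|\gamma_{\eps_n}u_n|-|\gamma_0u_n|)_+^2$ rather than $|\gamma_{\eps_n}u_n-\gamma_0u_n|^2$. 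That loss is harmless here only because the barrier forces $\gamma_0u_n\to0$, after which $\norm{(a-b)_+}\ge\norm{a}-\norm{b}$ finishes the step. Without shielding, the limit should contain a jump term $\norm{\gamma^+u-\gamma^-u}^2$, and the modulus trick would give a non-sharp liminf. The paper's splitting, or your gauge variant $e^{i\int_0^tA_\nu}u_n$, keeps the phase and would still work there.

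Your trace steps also differ from the paper's but are valid:
\begin{itemize}
\item The multiplicative trace inequality on $\Scal^\circ$ yields $\gamma_0u_n\to0$ with rate $O(n^{-1/4})$, instead of the paper's compactness argument.
\item Shifting $u_n$ by $\eps_n$ onto a fixed collar replaces the paper's three-term $\sqrt{\delta_1C_1}$ argument with a single compactness step. Weak convergence of the shifted functions to $u$ follows from continuity of translations in $\LL^2$.
\end{itemize}

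Two small clean-ups. First, in the recovery sequence simply take $u_n=u$ on $\{t\ge\eps_n\}$, as the paper does. Your shifted alternative would have to interpolate $u(p,0)$ rather than $u(p,\eps_n)$; otherwise $u_n$ jumps at $t=\eps_n$ and is not in $\Hcal^1(\R^3)$. Second, drop the aside about replacing the weak liminf by strong $\Gamma$-convergence plus equicoercivity. Energy sublevel sets are not precompact in $\LL^2(\R^3)$, and strong $\Gamma$-convergence alone does not imply strong resolvent convergence. The liminf along weakly convergent sequences that you actually prove is what Theorem~\ref{teorGammaConvTeor} requires.
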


\begin{remark}
In this approach, the parameter $L$ determines the conductivity of the
thin layer through $
a_{n,L}(\bx)=L\varepsilon_n.$ 
Thus the assumption $L>0$ in Theorem~\ref{thmLimRobinMag} is essential.
Indeed, if $L<0$, then the conductivity becomes negative on the layer,
the principal part of the approximating operator is no longer elliptic,
and the associated quadratic form is not semibounded. On the other hand,
if $L=0$, the conductivity vanishes identically throughout the layer, so
the associated quadratic form is no longer closed on
$\Hcal^1(\mathbb{R}^3)$, and the First Representation Theorem does not
produce the operator $H_n^{(0)}$. In particular, the Neumann operator cannot
be obtained by simply setting $L=0$ in the approximating family
$H_n^{(L)}$.
\end{remark}

\begin{remark}
In Section~\ref{sectNeumannL0}, it is shown that for all $L_0\in\R$, $H^{(L)}$ converges, in the norm resolvent sense, to $H^{(L_0)}$ if $L\to L_0$, hence there is a kind of stability with respect to the Robin parameter.
\end{remark}

\section{$\Gamma$-convergence of quadratic forms}
\label{sectGammaTechn}

The proof of the main convergence theorem is based on the
$\Gamma$-convergence of quadratic forms. In Hilbert spaces, this notion
provides a convenient criterion for the strong resolvent convergence of
the associated self-adjoint operators. The extension of
$\Gamma$-convergence to complex Hilbert spaces was discussed
in~\cite{BdOV2014}.

Let $\Hcal$ be a complex Hilbert space, and let
$q:\dom q\subset\Hcal\to\R$ be a quadratic form. As usual, we extend $q$
to the whole space by setting
\[
q(\xi)=+\infty,
\qquad
\xi\in\Hcal\setminus\dom q.
\]
Throughout this section we consider quadratic forms that are closed and
uniformly lower bounded, namely,
\[
q(\xi)\ge\beta\|\xi\|^2,
\qquad
\xi\in\dom q,
\]
for some constant $\beta\in\R$. Such forms are lower semicontinuous
\cite[Theorem~9.3.11]{ISTQD}, and by the First Representation
Theorem \cite[Section~4.2]{ISTQD} each of them determines a unique
self-adjoint operator $T$, whose form domain is $\dom q$.

The relevant notion of convergence here is the following.

\begin{definition}
\label{defCombGammaConve}
Let $q_n$ and $q$ be closed, uniformly lower-bounded quadratic forms on
$\Hcal$. We say that $q_n$ converges to $q$ in the
\emph{combined $\Gamma$-convergence sense} if the following conditions
hold.
\begin{itemize}
\item[($\Gamma1$)] (\emph{Weak liminf inequality})
For every sequence
$\xi_n\rightharpoonup\xi$ weakly in $\Hcal$,
\[
\liminf_{n\to\infty}q_n(\xi_n)\ge q(\xi).
\]

\item[($\Gamma2$)] (\emph{Strong limsup inequality})
For every $\xi\in\Hcal$, there exists a recovery sequence
$\xi_n\to\xi$ strongly in $\Hcal$ such that
\[
\limsup_{n\to\infty}q_n(\xi_n)\le q(\xi).
\]
\end{itemize}
\end{definition}

The verification of $(\Gamma2)$ is usually the more delicate part of a
$\Gamma$-convergence proof, since it requires the explicit construction
of a recovery sequence.

Let $T_n$ and $T$ denote the self-adjoint operators associated with
$q_n$ and $q$, respectively. Since $\dom T$ need not be dense in
$\Hcal$, let
\[
\Hcal_0=\overline{\dom T}
\]
and denote by $P_0:\Hcal\to\Hcal_0$ the orthogonal projection. The following theorem is a reformulation of \cite[Theorem~1]
{BdOV2014}; for a full proof, see~\cite{dalMaso1993}.

\begin{theorem}  \label{teorGammaConvTeor}
The following assertions are equivalent.
\begin{enumerate}
\item[(i)]
$q_n$ converges to $q$ in the combined $\Gamma$-convergence sense.

\item[(ii)]
$T_n$ converges to $T$ in the strong resolvent sense,
\[
R_i(T_n)\xi
\rightarrow
R_i(T)P_0\xi,
\qquad
\forall\,\xi\in\Hcal.
\]
\end{enumerate}
\end{theorem}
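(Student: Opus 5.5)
The plan is to pass through a real spectral parameter and use the variational characterization of resolvents. Fix $\lambda>-\beta+1$, so that all forms $q_n+\lambda\|\cdot\|^2$ and $q+\lambda\|\cdot\|^2$ are uniformly coercive. For $\zeta\in\Hcal$ the vector $u_n=(T_n+\lambda)^{-1}\zeta$ is the unique minimizer of
\[
F_n(v)=q_n(v)+\lambda\|v\|^2-2\,\mathrm{Re}\langle \zeta,v\rangle .
\]
Similarly, $u=(T+\lambda)^{-1}P_0\zeta$ is the unique minimizer of $F(v)=q(v)+\lambda\|v\|^2-2\,\mathrm{Re}\langle\zeta,v\rangle$. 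The projection appears because $q=+\infty$ off $\dom q\subset\Hcal_0$, so only $P_0\zeta$ is seen. The minimal values are $-\langle\zeta,u_n\rangle$ and $-\langle\zeta,u\rangle$. I first prove the equivalence with $(T_n+\lambda)^{-1}\to(T+\lambda)^{-1}P_0$ strongly. At the end I transfer this to the point $i$ via the resolvent identity and the standard fact that strong resolvent convergence at one nonreal/regular point propagates to all of them (this is the pseudo-resolvent version, since the limit $(T+\lambda)^{-1}P_0$ is a pseudo-resolvent with kernel $\Hcal_0^\perp$).

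\textbf{(i)$\Rightarrow$(ii).} Combined $\Gamma$-convergence is stable under adding $\lambda\|v\|^2-2\,\mathrm{Re}\langle\zeta,v\rangle$. The term $-2\,\mathrm{Re}\langle\zeta,\cdot\rangle$ is weakly continuous, and $\lambda\|\cdot\|^2$ is weakly lower semicontinuous and strongly continuous, so ($\Gamma1$) and ($\Gamma2$) survive for $F_n\to F$. The minimizers satisfy $\|u_n\|\le C\|\zeta\|$ uniformly, so any subsequence has a weakly convergent sub-subsequence $u_n\rightharpoonup w$. Combining ($\Gamma1$) with ($\Gamma2$) applied to the recovery sequence of $u$ gives
\[
F(w)\le\liminf F_n(u_n)\le\limsup F_n(u_n)\le F(u),
\]
so $w=u$ and $\langle\zeta,u_n\rangle\to\langle\zeta,u\rangle$. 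Then
\[
q_n(u_n)+\lambda\|u_n\|^2=\langle\zeta,u_n\rangle\to q(u)+\lambda\|u\|^2 .
\]
Since $\liminf q_n(u_n)\ge q(u)$ and $\liminf\|u_n\|\ge\|u\|$, this forces $\|u_n\|\to\|u\|$, and weak convergence plus convergence of norms gives strong convergence.

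\textbf{(ii)$\Rightarrow$(i).} For ($\Gamma1$) I use the duality inequality
\[
q_n(v)+\lambda\|v\|^2\ \ge\ 2\,\mathrm{Re}\langle\zeta,v\rangle-\langle\zeta,(T_n+\lambda)^{-1}\zeta\rangle,
\]
valid for all $\zeta$ because the right side is $-F_n(u_n)+\ldots$, that is, it expresses $F_n(v)\ge F_n(u_n)$. Let $\xi_n\rightharpoonup\xi$ and take the $\liminf$: the right side converges to $2\,\mathrm{Re}\langle\zeta,\xi\rangle-\langle\zeta,(T+\lambda)^{-1}P_0\zeta\rangle$, and I then take the supremum over $\zeta$.
\begin{itemize}
\item If $\xi\in\Hcal_0$, the supremum over $\zeta\in\dom q$-range equals $q(\xi)+\lambda\|\xi\|^2$. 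This is a Fenchel conjugate computation using that $q+\lambda$ is closed, with the value $+\infty$ when $\xi\notin\dom q$.
\item If $\xi\notin\Hcal_0$, choosing $\zeta=s(I-P_0)\xi$ with $s\to\infty$ makes the bound $+\infty$.
\end{itemize}
For ($\Gamma2$), take first $\xi=(T+\lambda)^{-1}\zeta$ with $\zeta\in\Hcal_0$ and set $\xi_n=(T_n+\lambda)^{-1}\zeta\to\xi$. Then $q_n(\xi_n)+\lambda\|\xi_n\|^2=\langle\zeta,\xi_n\rangle\to q(\xi)+\lambda\|\xi\|^2$. For general $\xi\in\dom q$, approximate by elements of $\dom T$ in the form norm (a core) and use a diagonal argument; for $\xi\notin\dom q$, ($\Gamma2$) is trivial.

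The step I expect to be the main obstacle is the weak $\liminf$ inequality in (ii)$\Rightarrow$(i), namely the exact identification of the supremum in the duality bound with $q(\xi)$ when the limit operator is only densely defined in $\Hcal_0$. A secondary technical point is the transfer from the real point $-\lambda$ to the point $i$ for a non-densely defined limit, which I will handle through the pseudo-resolvent formalism.
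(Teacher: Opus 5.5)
Your proof of (i)$\Rightarrow$(ii) is correct. It is the standard variational argument, essentially the one in Dal Maso's monograph, which the paper cites in place of a proof. One small repair is needed: take $\lambda>\max\{0,-\beta\}$. For $\beta>1$ your choice $\lambda>1-\beta$ allows $\lambda\le 0$, and then $\lambda\norm{\cdot}^2$ is not weakly lower semicontinuous. The $(\Gamma2)$ half of (ii)$\Rightarrow$(i) is also fine, because there $\xi_n\to\xi$ strongly and the term $\lambda\norm{\xi_n}^2$ does no harm.

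\textbf{The gap.} It lies in the weak $\liminf$ inequality of (ii)$\Rightarrow$(i), and it is not the Fenchel computation you flagged. Your duality bound carries $\lambda\norm{\xi_n}^2$ on the left, so after taking the supremum over $\zeta$ you only obtain
\[
\liminf_{n\to\infty}\big(q_n(\xi_n)+\lambda\norm{\xi_n}^2\big)\ \ge\ q(\xi)+\lambda\norm{\xi}^2 ,
\]
that is, $(\Gamma1)$ for the shifted forms. For $\xi\notin\dom q$ this suffices, since the right side is $+\infty$ and $\norm{\xi_n}$ is bounded. For $\xi\in\dom q$, weak convergence only gives $\liminf\norm{\xi_n}\ge\norm{\xi}$, which points the wrong way, so the norm term cannot be removed.

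This cannot be repaired in the stated generality $\beta\in\R$. On $\Hcal=\ell^2$ take $q_n(\xi)=-|\langle e_n,\xi\rangle|^2\ge-\norm{\xi}^2$ and $q=0$. Then $T_n=-\langle e_n,\cdot\rangle e_n\to 0$ in the strong resolvent sense, yet $\xi_n=e_n\rightharpoonup 0$ gives $\liminf q_n(\xi_n)=-1<0=q(0)$.

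\textbf{The fix under $\beta\ge 0$.} So (ii)$\Rightarrow$(i) needs $\beta\ge 0$, which is Mosco's setting. There the fix is to keep the shift out of the lower bound. Put $w_n=(T_n+\lambda)^{-1}\zeta$; nonnegativity of $q_n(\xi_n-w_n)$ gives
\[
q_n(\xi_n)\ \ge\ 2\operatorname{Re}\langle\zeta-\lambda w_n,\xi_n\rangle-\langle\zeta-\lambda w_n,w_n\rangle .
\]
The right side pairs a strongly convergent sequence with a weakly convergent one. For $\xi\in\dom q$ its limit is $q(\xi)-q(\xi-w)$, with $w=(T+\lambda)^{-1}P_0\zeta$. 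Letting $w$ run through the form core $\dom T$ yields $\liminf q_n(\xi_n)\ge q(\xi)$.

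Note that the paper only ever uses (i)$\Rightarrow$(ii), which your argument does establish.
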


In the proof of Theorem~\ref{thmLimRobinMag}, we shall therefore verify
conditions $(\Gamma1)$ and $(\Gamma2)$ in order to establish the strong
resolvent convergence of the corresponding magnetic Schr\"odinger
operators.

\section{Proof of Theorem~\ref{thmLimRobinMag}}

We prove the theorem by establishing the combined $\Gamma$-convergence of the quadratic forms
$b_n^{(L)}$ to $b^{(L)}$. The following simple inequality will be used.

\begin{lemma}\label{lemmaabdeelta}
Let $a,b\in\mathbb C$ and let $\delta>0$. Then
\[
|a+b|^2
\ge
(1-\delta)|a|^2
-\frac1\delta |b|^2.
\]
\end{lemma}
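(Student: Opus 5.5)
The plan is to expand the square and then control the cross term with Young's inequality; this is the only nontrivial step. Expanding gives
\[
|a+b|^2=|a|^2+|b|^2+2\,\mathrm{Re}(\bar a b).
\]
The cross term is bounded from below through the Cauchy--Schwarz estimate $|\mathrm{Re}(\bar a b)|\le |a||b|$ together with the weighted AM--GM inequality
\[
2|a||b|\le \delta|a|^2+\frac1\delta|b|^2 ,
\]
valid for any $\delta>0$. This weighted AM--GM inequality is itself just the statement $\bigl(\sqrt\delta|a|-|b|/\sqrt\delta\bigr)^2\ge0$.

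Inserting this into the expansion yields
\[
|a+b|^2\ge |a|^2+|b|^2-\delta|a|^2-\frac1\delta|b|^2=(1-\delta)|a|^2+\Bigl(1-\frac1\delta\Bigr)|b|^2 .
\]
The last step is to discard the $|b|^2$ coming from the expansion. Since $|b|^2\ge0$,
\[
\Bigl(1-\frac1\delta\Bigr)|b|^2\ge-\frac1\delta|b|^2 ,
\]
and this gives exactly the stated inequality. The bound actually obtained is slightly stronger, but the weaker form is the one needed later.

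There is no real obstacle in this argument. The only point to check is that the direction of the inequality in the cross-term bound is correct for complex numbers, and this is guaranteed by $-2\,\mathrm{Re}(\bar a b)\le 2|a||b|$.
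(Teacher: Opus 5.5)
Your proof is correct and matches the paper's argument step for step. Both expand $|a+b|^2$, bound $2\,\mathrm{Re}(\bar a b)\ge -2|a||b|$, apply Young's inequality $2|a||b|\le \delta|a|^2+\frac1\delta|b|^2$, and then weaken $\bigl(1-\frac1\delta\bigr)|b|^2$ to $-\frac1\delta|b|^2$.
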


\begin{proof}
Since
\[
|a+b|^2
=
|a|^2+|b|^2+2\operatorname{Re}(\bar ab),
\]
we have
\[
|a+b|^2
\ge
|a|^2+|b|^2-2|a||b|.
\]
Young's inequality gives
\[
2|a||b|
\le
\delta |a|^2+\frac1\delta |b|^2,
\]
hence
\[
-2|a||b|
\ge
-\delta|a|^2-\frac1\delta|b|^2.
\]
Substituting this estimate yields
\[
|a+b|^2
\ge
(1-\delta)|a|^2
+\left(1-\frac1\delta\right)|b|^2
\ge
(1-\delta)|a|^2-\frac1\delta|b|^2,
\]
which proves the claim.
\end{proof}

\subsection*{Verification of $(\Gamma1)$}
Let $(u_n)\subset \LL^2(\R^3)$ satisfy $u_n\rightharpoonup u$ weakly in $\LL^2(\R^3)$ and
\[
C_0:=\sup_n b_n^{({L})}(u_n)<\infty .
\]
We must show $b^{({L})}(u)\le\liminf_n b_n^{({L})}(u_n)$. Passing to a subsequence
we may assume the liminf is a limit. Since a weakly convergent sequence is bounded,
$M:=\sup_n\|u_n\|_{\LL^2(\R^3)}<\infty$.

\smallskip
\emph{Step 1: Bound off the layer.}
Since $a_n\equiv1$ on $\R^3\setminus\mathcal S_{\eps_n}$,
\[
\int_{\R^3\setminus \mathcal S_{\eps_n}}|(\nabla+i\bA)u_n|^2\,\dr {\bx}\le C_0 .
\]
By the diamagnetic-type bound $|\nabla v|^2\le 2|(\nabla+i\bA)v|^2+2\|\bA\|_\infty^2|v|^2$,
\begin{equation}\label{eqC1bound}
\int_{\R^3\setminus \mathcal S_{\eps_n}}|\nabla u_n|^2\,\dr {\bx}
\le 2C_0+2\|\bA\|_\infty^2M^2=:C_1,
\qquad\text{uniformly in }n.
\end{equation}
Let $K\Subset\mathcal S'$ be any compact set. Since $\operatorname{dist}(K,\mathcal S)>0$, for $n$
large (depending on $K$) we have $K\subset\mathcal S'\setminus\mathcal S_{\eps_n}$, so
$\int_K|\nabla u_n|^2\le C_1$ by~\eqref{eqC1bound}. Hence $(u_n)$ is bounded in
${\mathcal H}^1(K)$ for every such $K$, and weak lower semicontinuity of the Dirichlet
energy on $K$ gives
\[
\int_K|\nabla u|^2\,\dr {\bx}\le\liminf_n\int_K|\nabla u_n|^2\,\dr {\bx}\le C_1 .
\]
Exhausting $\mathcal S'$ by an increasing sequence of compacts $K_j\uparrow\mathcal S'$ and
using monotone convergence,
\[
\int_{\mathcal S'}|\nabla u|^2\,\dr {\bx}=\lim_{j\to\infty}\int_{K_j}|\nabla u|^2\,\dr {\bx}\le C_1<\infty .
\]
Together with $u\in\LL^2(\mathcal S')$ (inherited from $u_n\rightharpoonup u$ in $\LL^2(\R^3)$),
this shows $u\in{\mathcal H}^1(\mathcal S')$, correcting the informal
``$H^1_{\rm loc}\Rightarrow H^1$'' step. The same exhaustion argument, now applied to
$(\nabla+i\bA)u_n$ (with $a_n\equiv1$ on each $K_j$ once $n$ is large), gives
\begin{equation}\label{eqouterbound}
\int_{\mathcal S'}|(\nabla+i\bA)u|^2\,\dr {\bx}
\le\liminf_n\int_{\R^3}a_n|(\nabla+i\bA)u_n|^2\,\dr {\bx} .
\end{equation}

\smallskip
\emph{Step 2: Vanishing on $\mathcal S^\circ$.}
Since $a_n\equiv1$ on $\mathcal S^\circ$ as well,
\[
\int_{\mathcal S^\circ}|(\nabla+i\bA)u_n|^2\,\dr {\bx}\le C_0,
\qquad
n\int_{\mathcal S^\circ}|u_n|^2\,\dr {\bx}\le C_0 .
\]
The second bound gives $\int_{\mathcal S^\circ}|u_n|^2\to0$, so $u_n\to0$ strongly in
$\LL^2(\mathcal S^\circ)$; since also $u_n\rightharpoonup u$, we get $u=0$ a.e.\ on
$\mathcal S^\circ$. The first bound, together with the estimate
\[
|\nabla v|^2
\le
2|(\nabla+i\bA)v|^2
+
2\|\bA\|_\infty^2|v|^2,
\]
shows that $(u_n|_{\mathcal S^\circ})$ is bounded in
${\mathcal H}^1(\mathcal S^\circ)$,   hence
$u_n|_{\mathcal S^\circ}\rightharpoonup 0$ weakly in ${\mathcal H}^1(\mathcal S^\circ)$. Since
$\mathcal S^\circ$ is bounded and $\gamma_D:{\mathcal H}^1(\mathcal S^\circ)\to\LL^2(\mathcal S)$
is compact, weak convergence in ${\mathcal H}^1(\mathcal S^\circ)$ together with the strong
$\LL^2(\mathcal S^\circ)$ convergence to $0$ forces
\begin{equation}\label{eqinnertrace}
\gamma_D\big(u_n|_{\mathcal S^\circ}\big)\rightarrow 0
\quad\text{strongly in }\LL^2(\mathcal S).
\end{equation}
Because $u_n\in{\mathcal H}^1(\R^3)$, the trace at $t=0$ is single-valued, so
$\gamma_0(u_n):=u_n(\cdot,0)$ computed from either side coincides with the trace in
\eqref{eqinnertrace}; thus $\gamma_0(u_n)\to0$ in $\LL^2(\mathcal S)$.

\smallskip
\emph{Step 3: Trace at the outer edge of the layer.}
For $n$ large ($\eps_n<\delta_0$) set $\tau_n:=\gamma_{\eps_n}(u_n)\in\LL^2(\mathcal S)$, the
trace of $u_n$ at depth $t=\eps_n$. We claim
\begin{equation}\label{eqtaun}
\tau_n\rightarrow\gamma_D(u)\quad\text{strongly in }\LL^2(\mathcal S).
\end{equation}
Fix $0<\delta_1<\delta_0$. For $n$ so large that $\eps_n<\delta_1$, the collar
$\{\eps_n<t<\delta_1\}$ lies in
$\mathcal S'\setminus\mathcal S_{\eps_n}$,
where $a_n\equiv1$. Since
\[
|\partial_tu_n|
=
|\nu\cdot\nabla u_n|
\le
|\nabla u_n|,
\]
the estimate \eqref{eqC1bound} yields
\[
\int_{\eps_n}^{\delta_1}
\|\partial_tu_n(\cdot,t)\|^2_{\LL^2(\mathcal S)}
\,dt
\le
C_1.
\]
By Cauchy-Schwarz in $t$,
\begin{equation}\label{eqterm1}
\|\tau_n-\gamma_{\delta_1}(u_n)\|_{\LL^2(\mathcal S)}
\le \Big(\delta_1\int_{\eps_n}^{\delta_1}\|\partial_tu_n(\cdot,t)\|^2_{\LL^2(\mathcal S)}\,\dr t\Big)^{1/2}
\le \sqrt{\delta_1C_1}.
\end{equation}
On the fixed compact collar $\overline{\{\delta_1/2\le t\le\delta_0\}}$, $(u_n)$ is
bounded in ${\mathcal H}^1$ by Step~1, hence $u_n\rightharpoonup u$ weakly there, and
compactness of the trace at $t=\delta_1$ gives
\begin{equation}\label{eqterm2}
\gamma_{\delta_1}(u_n)\rightarrow \gamma_{\delta_1}(u)\quad\text{strongly in }\LL^2(\mathcal S)
\quad (n\to\infty),\ \ \delta_1\text{ fixed.}
\end{equation}
Finally, since $u\in{\mathcal H}^1(\mathcal S')$, the map $t\mapsto \gamma_t(u)\in\LL^2(\mathcal S)$
is continuous on $[0,\delta_0)$, so
\begin{equation}\label{eqterm3}
\|\gamma_{\delta_1}(u)-\gamma_D(u)\|_{\LL^2(\mathcal S)}\rightarrow0
\qquad(\delta_1\to0^+).
\end{equation}
Combining \eqref{eqterm1}-\eqref{eqterm3} by the triangle inequality,
\[
\limsup_{n\to\infty}\|\tau_n-\gamma_D(u)\|_{\LL^2(\mathcal S)}
\le \sqrt{\delta_1C_1}+\|\gamma_{\delta_1}(u)-\gamma_D(u)\|_{\LL^2(\mathcal S)}
\]
for every $\delta_1\in(0,\delta_0)$; letting $\delta_1\to0^+$ proves \eqref{eqtaun}.

\smallskip
\emph{Step 4: Contribution of the thin layer.}
Write $A_\nu=\nu\cdot\bA$ and $\bA = A_\nu\nu+\bA_\tau$ for the normal and tangential decomposition of~$\bA$. 
Let
\[
g_n=(\partial_t+iA_\nu)u_n,
\]
so that, in normal coordinates,
\[
(\nabla+i\bA)u_n
=
\nu\,g_n
+
(\text{tangential component}),
\]
and therefore
\[
|(\nabla+i\bA)u_n|^2\ge |g_n|^2.
\]

Fix $p\in\mathcal S$ such that the slice
$t\mapsto u_n(\Phi(p,t))$
belongs to $\mathcal H^1(0,\eps_n)$, which holds for almost every
$p\in\mathcal S$ by the slicing theorem for Sobolev functions.
Along such a slice,
\[
\partial_tu_n=g_n-iA_\nu u_n,
\]
and hence the one-dimensional fundamental theorem of calculus yields
\[
\int_0^{\eps_n}g_n(p,t)\,\dr t
=
u_n(\Phi(p,\eps_n))
-
u_n(\Phi(p,0))
+
i\int_0^{\eps_n}
A_\nu(p,t)\,
u_n(\Phi(p,t))
\,\dr t.
\]

Applying the Cauchy-Schwarz inequality,
\[
\int_0^{\eps_n}|g_n|^2\,\dr t
\ge
\frac1{\eps_n}
\left|
\int_0^{\eps_n}g_n\,\dr t
\right|^2,
\]
and then Lemma~\ref{lemmaabdeelta} with
\[
a=
u_n(\Phi(p,\eps_n))
-
u_n(\Phi(p,0)),
\qquad
b=
i\int_0^{\eps_n}
A_\nu u_n\,\dr t,
\]
gives, for every $\delta\in(0,1)$,
\[
\int_0^{\eps_n}|g_n|^2\,\dr t
\ge
\frac{1-\delta}{\eps_n}
\left|
u_n(\Phi(p,\eps_n))
-
u_n(\Phi(p,0))
\right|^2
-
\frac{\|\bA\|_\infty^2}{\delta}
\int_0^{\eps_n}|u_n|^2\,\dr t.
\]

Integrating over $p\in\mathcal S$ and multiplying by
$a_n={L}\eps_n$, we obtain
\begin{align*}
a_n
\int_{\mathcal S}
\int_0^{\eps_n}
|g_n|^2
\,\dr t\,\dr\sigma
\ge\;&
{L}(1-\delta)
\|
\gamma_{\eps_n}(u_n)-\gamma_0(u_n)
\|_{\LL^2(\mathcal S)}^2
\\
&
-
{L}\eps_n
\frac{\|\bA\|_\infty^2}{\delta}
\int_{\mathcal S_{\eps_n}}
|u_n|^2\,\dr {\bx}.
\end{align*}

Since
\[
\int_{\mathcal S_{\eps_n}}
|u_n|^2\,\dr {\bx}
\le
M^2,
\]
choosing $
\delta=\delta_n=\sqrt{\eps_n}$ 
yields
\[
{L}\eps_n
\frac{\|\bA\|_\infty^2}{\delta_n}
\int_{\mathcal S_{\eps_n}}
|u_n|^2\,\dr {\bx}
=
O(\sqrt{\eps_n})
\rightarrow0,
\]
whereas $
1-\delta_n\rightarrow1.$

Finally, using the Jacobian estimate~\eqref{eq:jacobian} to pass from tubular coordinates to the Euclidean measure,
\[
\liminf_{n\to\infty}
a_n
\int_{\mathcal S_{\eps_n}}
|(\nabla+i\bA)u_n|^2\,\dr {\bx}
\ge
{L}
\lim_{n\to\infty}
\|
\gamma_{\eps_n}(u_n)-\gamma_0(u_n)
\|_{\LL^2(\mathcal S)}^2.
\]

By Step~2, $
\gamma_0(u_n) \rightarrow0,$  and by Step~3, $
\gamma_{\eps_n}(u_n)=\tau_n
\rightarrow\gamma_D(u)$  strongly in $\LL^2(\mathcal S)$.
Therefore,
\[
\gamma_{\eps_n}(u_n)-\gamma_0(u_n)
\rightarrow
\gamma_D(u)
\]
strongly in $\LL^2(\mathcal S)$, and hence
\begin{equation}
\label{eqlayerfinal}
\liminf_{n\to\infty}
a_n
\int_{\mathcal S_{\eps_n}}
|(\nabla+i\bA)u_n|^2\,\dr {\bx}
\ge
{L}
\int_{\mathcal S}
|\gamma_D(u)|^2\,\dr\sigma.
\end{equation}

\smallskip
Finally, decomposing $\R^3=\mathcal S^\circ\cup\mathcal S\cup
\mathcal S_{\eps_n}\cup(\mathcal S'\setminus\mathcal S_{\eps_n})$ and using
\eqref{eqouterbound} and \eqref{eqlayerfinal} (the barrier term over $\mathcal S^\circ$ is
nonnegative and can be dropped),
\[
\liminf_{n\to\infty}b_n^{({L})}(u_n)
\ \ge\
\int_{\mathcal S'}|(\nabla+i\bA)u|^2\,\dr {\bx}
+{L}\int_{\mathcal S}|\gamma_D(u)|^2\,\dr\sigma
= b^{({L})}(u).
\]
This proves $(\Gamma1)$.

\subsection*{Verification of $(\Gamma2)$.}
Let $u\in{\mathcal C}=\CC^\infty(\overline{\mathcal S'})\cap{\mathcal H}^1(\mathcal S')$, a dense
subset of ${\mathcal H}^1(\mathcal S')$. Extend $u$ by zero to $\mathcal S^\circ$ and define
the recovery sequence
\[
u_n(x)=
\begin{cases}
0, & x\in\mathcal S^\circ,\\[1mm]
\dfrac{t}{\eps_n}\,\gamma_{\eps_n}(u)(p), & x=\Phi(p,t)\in\mathcal S_{\eps_n},\\[2mm]
u(x), & x\in\mathcal S'\setminus\mathcal S_{\eps_n}.
\end{cases}
\]
By construction $u_n\in{\mathcal H}^1(\R^3)$ and $u_n\to u$ strongly in $\LL^2(\R^3)$: outside
the layer $u_n=u$ exactly, on $\mathcal S^\circ$ both are $0$, and on the layer
$\|u_n\|_{\LL^2(\mathcal S_{\eps_n})}\to0$ because $u$ is bounded and
$|\mathcal S_{\eps_n}|\to0$. (This is all that Definition~\ref{defCombGammaConve} requires;
we do {not} claim, and it is in fact false, that $u_n\to u$ strongly in
${\mathcal H}^1(\R^3)$; see the computation below, where $\nabla u_n$ blows up like
$1/\eps_n$ inside the shrinking layer while $\nabla u=O(1)$ there. It is exactly this
blow-up, tempered by the vanishing conductivity $a_n$, that produces the nontrivial Robin
term in the limit.)

Inside the layer,
\[
\nabla u_n=\nu\,\frac{\gamma_{\eps_n}(u)(p)}{\eps_n}+\frac{t}{\eps_n}\nabla_\tau \gamma_{\eps_n}(u)(p),
\]
so that
\[
(\nabla+i\bA)u_n(p,t)=\nu\,\frac{\gamma_{\eps_n}(u)(p)}{\eps_n}+R_n(p,t),
\]
\[
R_n(p,t):=\frac{t}{\eps_n}\nabla_\tau \gamma_{\eps_n}(u)(p)
+i\bA\big(\Phi(p,t)\big)\,\frac{t}{\eps_n}\gamma_{\eps_n}(u)(p).
\]
Since $u\in \CC^\infty(\overline{\mathcal S'})$, both $\nabla_\tau u$ and $u$ are bounded on
the compact set $\mathcal S\times[0,\delta_0]$; since also $\bA$ is bounded on $\R^3$ and
$t/\eps_n\in(0,1)$ on the layer, $R_n$ satisfies
\[
\sup_{n}\ \sup_{(p,t)\in\mathcal S\times(0,\eps_n)}|R_n(p,t)|<\infty .
\]
Consequently, expanding the square and using $|\nu\,a|^2=|a|^2$,
\begin{align*}
|(\nabla+i\bA)u_n(p,t)|^2
&=
\frac{|\gamma_{\eps_n}(u)(p)|^2}{\eps_n^2}
+
2\operatorname{Re}\!\Big(
\overline{\nu\,\tfrac{\gamma_{\eps_n}(u)(p)}{\eps_n}}
\cdot R_n(p,t)
\Big)
+
|R_n(p,t)|^2
\nonumber\\
&=
\frac{|\gamma_{\eps_n}(u)(p)|^2}{\eps_n^2}
+
O\!\left(\frac{1}{\eps_n}\right),
\label{eqLayerExpansion}
\end{align*}
uniformly in $p\in\mathcal S$, $t\in(0,\eps_n)$, since the cross term is
$O(|\gamma_{\eps_n}(u)(p)|/\eps_n)=O(1/\eps_n)$ and $|R_n|^2=O(1)$.

Since $a_n={L}\eps_n$, integrating over $\mathcal S_{\eps_n}$ (using $J=1+O(t)$) gives
\begin{align*}
\int_{\mathcal S_{\eps_n}}a_n|(\nabla+i\bA)u_n|^2\,\dr {\bx}
&={L}\eps_n\int_{\mathcal S}\int_0^{\eps_n}
\Big(\frac{|\gamma_{\eps_n}(u)(p)|^2}{\eps_n^2}+O\big(\tfrac1{\eps_n}\big)\Big)\,(1+O(t))\,\dr t\,\dr\sigma
\\ &={L}\int_{\mathcal S}|\gamma_{\eps_n}(u)(p)|^2\,\dr\sigma+o(1).
\end{align*}
As $\eps_n\to0$, $\gamma_{\eps_n}(u)(p)\to\gamma_D(u)(p)$ uniformly on $\mathcal S$ (since $u\in
C^\infty(\overline{\mathcal S'})$), so
\[
\lim_{n\to\infty}\int_{\mathcal S_{\eps_n}}a_n|(\nabla+i\bA)u_n|^2\,\dr {\bx}
={L}\int_{\mathcal S}|\gamma_D(u)|^2\,\dr\sigma .
\]

Outside the layer $u_n=u$, so by dominated convergence
\[
\int_{\mathcal S'\setminus\mathcal S_{\eps_n}}|(\nabla+i\bA)u_n|^2\,\dr {\bx}
\rightarrow\int_{\mathcal S'}|(\nabla+i\bA)u|^2\,\dr {\bx} .
\]
Since $u_n\equiv0$ on $\mathcal S^\circ$, the barrier term vanishes identically for every
$n$. Adding the three contributions,
\[
\limsup_{n\to\infty}b_n^{({L})}(u_n)
=\int_{\mathcal S'}|(\nabla+i\bA)u|^2\,\dr {\bx}+{L}\int_{\mathcal S}|\gamma_D(u)|^2\,\dr\sigma
= b^{({L})}(u),
\]
which is even an equality, and in particular $(\Gamma2)$ holds for $u\in{\mathcal C}$. 

Since $\mathcal C=\CC^\infty(\overline{\mathcal S'})\cap{\mathcal H}^1(\mathcal S')$ is dense
in ${\mathcal H}^1(\mathcal S')$, fix $u\in{\mathcal H}^1(\mathcal S')$ (extended by $0$ to
$\mathcal S^\circ$, as an element of $\LL^2(\R^3)$) and a sequence $u_k\in\mathcal C$ with
$u_k\to u$ strongly in ${\mathcal H}^1(\mathcal S')$. Since $\bA$ is bounded, the map
$v\mapsto(\nabla+i\bA)v$ is a bounded operator ${\mathcal H}^1(\mathcal S')\to\LL^2(\mathcal
S')$, and $\gamma_D:{\mathcal H}^1(\mathcal S')\to\LL^2(\mathcal S)$ is bounded (trace
theorem); hence the quadratic form $b^{({L})}$ is continuous with respect to strong
${\mathcal H}^1(\mathcal S')$-convergence, so
\[
b^{({L})}(u_k)\rightarrow b^{({L})}(u)\qquad(k\to\infty).
\]

For each fixed $k$, the construction above (applied to $u_k\in\mathcal C$) produces a
sequence $v_n^k\to u_k$ strongly in $\LL^2(\R^3)$ as $n\to\infty$, with
\[
\limsup_{n\to\infty}b_n^{({L})}(v_n^k)\le b^{({L})}(u_k).
\]
Fix $k$. By definition of $\limsup$ and of strong convergence, there exists $N_k$ such that
\[
n\ge N_k\ \Longrightarrow\
\|v_n^k-u_k\|_{\LL^2(\R^3)}<\frac1k,
\qquad
b_n^{({L})}(v_n^k)\le b^{({L})}(u_k)+\frac1k .
\]
Choose $n_1<n_2<\cdots$ inductively with $n_k\ge N_k$ and $n_k\to\infty$. Define
$k(n):=\max\{k\ge1\mid n_k\le n\}$ for $n\ge n_1$ (so $k(n)\to\infty$ as $n\to\infty$), and set
the diagonal sequence
\[
w_n:=v_n^{\,k(n)},\qquad n\ge n_1 .
\]
By construction $n\ge n_{k(n)}\ge N_{k(n)}$, so
\[
\|w_n-u_{k(n)}\|_{\LL^2(\R^3)}<\frac1{k(n)},
\qquad
b_n^{({L})}(w_n)\le b^{({L})}\!\big(u_{k(n)}\big)+\frac1{k(n)} .
\]
Since $k(n)\to\infty$, the right sides of these two estimates give, respectively,
\[
\|w_n-u\|_{\LL^2(\R^3)}
\le \|w_n-u_{k(n)}\|_{\LL^2(\R^3)}+\|u_{k(n)}-u\|_{\LL^2(\R^3)}
\rightarrow0,
\]
using $u_k\to u$ in ${\mathcal H}^1(\mathcal S')\hookrightarrow\LL^2(\R^3)$, and
\[
\limsup_{n\to\infty}b_n^{({L})}(w_n)
\le\limsup_{n\to\infty}\Big[b^{({L})}\!\big(u_{k(n)}\big)+\frac1{k(n)}\Big]
= b^{({L})}(u),
\]
using $b^{({L})}(u_k)\to b^{({L})}(u)$. Thus $w_n$ is a recovery sequence for $u$,
and $(\Gamma2)$ holds for every $u\in{\mathcal H}^1(\mathcal S')$.

Finally, for $\xi\in\LL^2(\R^3)$ with $\xi\notin{\mathcal H}^1(\mathcal S')$ under the above
identification  (either $\xi\not\equiv0$ on $\mathcal S^\circ$, or $\xi|_{\mathcal S'}\in
\LL^2(\mathcal S')\setminus{\mathcal H}^1(\mathcal S')$) one has $b^{({L})}(\xi)=+\infty$
by definition of $b^{({L})}$, so the constant sequence $\xi_n\equiv\xi$ trivially
satisfies $\xi_n\to\xi$ strongly in $\LL^2(\R^3)$ and $\limsup_n b_n^{({L})}(\xi_n)\le
+\infty=b^{({L})}(\xi)$. Hence $(\Gamma2)$ holds on all of $\LL^2(\R^3)$.

Since both $(\Gamma1)$ and $(\Gamma2)$ hold, $b_n^{({L})}$ converges to $b^{({L})}$
in the combined $\Gamma$-convergence sense. By Theorem~\ref{teorGammaConvTeor}, the
associated self-adjoint operators satisfy
\[
R_i\big(H_n^{({L})}\big)u\rightarrow R_i\big(H^{({L})}\big)P_0u,
\qquad u\in\LL^2(\R^3),
\]
which is the strong resolvent convergence asserted in Theorem~\ref{thmLimRobinMag}.

\section{Proofs of Propositions~\ref{prophLconvh} and~\ref{proptildehmDiric}} \label{sectProsfsProps}

\subsection{Proof of Proposition~\ref{prophLconvh}}

We establish the combined $\Gamma$-convergence of the quadratic forms $q_n^{(L)}$ to $q^{(L)}$.

\subsection*{Verification of $(\Gamma1)$}

Let $u_n \rightharpoonup u$ weakly in $\LL^2(\R^3)$ and suppose that $\liminf q_n^{(L)}(u_n) < \infty$. We need to check that $u \in \Hcal^1(\R^3)$ and
\[
\liminf_{n \to \infty} q_n^{(L)}(u_n) \ge q^{(L)}(u).
\]

{\em Regularity of the limit.}
Since $W_{n,L} \ge 0$, the magnetic kinetic energy $\int_{\R^3} |(\nabla+i\bA)u_n|^2\,\dr \bx$ is uniformly bounded. Because the magnetic potential $\bA$ is bounded and $(u_n)$ is bounded in $\LL^2(\R^3)$, it follows that the standard gradient is also uniformly bounded, i.e., $\int_{\R^3} |\nabla u_n|^2\,\dr \bx \le C$. This implies that $u_n \rightharpoonup u$ weakly in $\Hcal^1(\R^3)$. By the weak lower semicontinuity of the norm in $\Hcal^1$,
\begin{equation}\label{eqKinetEnerEstim}
\int_{\R^3} |(\nabla+i\bA)u|^2\,\dr \bx \le \liminf_{n \to \infty} \int_{\R^3} |(\nabla+i\bA)u_n|^2\,\dr \bx.
\end{equation}

{\em Analysis of the $W_{n,L}$ term.}
We must analyze the term $I_n = \int_{\R^3} W_{n,L} |u_n|^2\,\dr \bx$. Using tubular coordinates $(p, t)$, one has
\[
I_n = \int_\Scal \int_{-\eps_n}^{\eps_n} \frac{L}{\eps_n} \varphi\left(\frac{t}{\eps_n}\right) |\gamma_t(u_n)(p)|^2 J(p, t)\,\dr t\,\dr \sigma(p).
\]
We perform the change of variables $t = \eps_n s$ so that $s \in (-1, 1)$ and
\[
I_n = \int_{-1}^{1} L \varphi(s) \left( \int_\Scal |\gamma_{\eps_ns}(u_n)(p)|^2 J(p, \eps_n s)\, \dr \sigma(p) \right) \dr s.
\]

{\em Compactness of traces.}
Recall that $\mathcal{N}_\delta$ denotes a fixed tubular neighborhood of $\Scal$ where the normal coordinates are valid. Since the sequence $(u_n)$ is bounded globally in $\Hcal^1(\R^3)$, its restriction is bounded in $\Hcal^1(\mathcal{N}_\delta)$. The compactness of the trace operator $\gamma_D: \Hcal^1(\mathcal{N}_\delta) \to \LL^2(\Scal)$ implies that, up to a subsequence, $\gamma_0(u_n) \to \gamma_D(u)$ strongly in $\LL^2(\Scal)$. To rigorously evaluate the trace on the moving parallel surfaces $t = \eps_n s$, we use the Fundamental Theorem of Calculus along the normal direction, i.e.,
\[
|\gamma_{\eps_ns}(u_n)(p) - \gamma_0(u_n)(p)|^2 \le \left( \int_0^{\eps_n s} |\partial_t u_n(p, \tau)|\,\dr \tau \right)^2 \le \eps_n \int_{-\eps_n}^{\eps_n} |\partial_t u_n(p, \tau)|^2\,\dr \tau.
\]
Integrating over $\Scal$, we obtain
\[
\norm{\gamma_{\eps_ns}(u_n) - \gamma_0(u_n)}_{\LL^2(\Scal)}^2 \le \eps_n \norm{\nabla u_n}_{\LL^2(\mathcal{N}_{\eps_n})}^2 \le C \eps_n \xrightarrow[n\to\infty]{} 0,
\]
uniformly for $s \in (-1, 1)$. Therefore, $\gamma_{\eps_ns}(u_n)$ converges strongly to $\gamma_D(u)$ in $\LL^2(\Scal)$.

{\em Convergence.}
Since $J(p, \eps_n s) \to 1$ uniformly, we have
\[
\lim_{n \to \infty} \int_\Scal |\gamma_{\eps_ns}(u_n)(p)|^2 J(p, \eps_n s)\,\dr \sigma(p) = \int_\Scal |\gamma_D(u)(p)|^2\,\dr \sigma(p).
\]
Because $(u_n)$ has uniformly bounded traces in the tubular neighborhood, the inner integral is uniformly bounded. By dominated convergence,
\begin{align*}
\lim_{n \to \infty} I_n &= L \int_{-1}^1 \varphi(s) \left( \lim_{n \to \infty} \int_\Scal |\gamma_{\eps_ns}(u_n)(p)|^2 J(p, \eps_n s)\,\dr \sigma(p) \right) \dr s \\
&= L \left( \int_{-1}^1 \varphi(s) \dr s \right) \int_\Scal |\gamma_D(u)|^2\,\dr \sigma.
\end{align*}
Since $\int_{-1}^1 \varphi(s) \dr s = 1$, we obtain exactly $L \int_\Scal |\gamma_D(u)|^2\,\dr \sigma$. Combining this with the kinetic energy estimate~\eqref{eqKinetEnerEstim} concludes the proof of the $(\Gamma1)$ condition.

\subsection*{Verification of $(\Gamma2)$}
For any $u\in\Hcal^1(\R^3)$, we must construct a recovery sequence $(u_n)$ such that $u_n\to u$ and $\limsup_{n\to\infty} q_n^{(L)}(u_n)\le q^{(L)}(u)$.

We choose the constant recovery sequence
\[
u_n=u.
\]
Since the magnetic kinetic term does not depend on $n$, it remains unchanged. Thus, it suffices to analyze the concentrating potential term
\[
I_n:=\int_{\R^3}W_{n,L}(\bx)|u(\bx)|^2\,\dr\bx.
\]

For $n$ sufficiently large, one has $\eps_n<\delta$, ensuring that the support of $W_{n,L}$ is contained entirely within the tubular neighborhood $\mathcal N_{\eps_n}$. Using the normal coordinates $(p,t)$ and the volume element $\dr\bx=J(p,t)\,\dr\sigma(p)\,\dr t$ introduced in the geometric preliminaries, we recall that $J(p,t)=1+O(t)$ uniformly for $|t|<\delta$.
Using these coordinates and the definition of $W_{n,L}$, we obtain
\[
I_n=
\int_\Scal
\int_{-\eps_n}^{\eps_n}
\frac{L}{\eps_n}
\varphi\!\left(\frac{t}{\eps_n}\right)
|\gamma_t(u)(p)|^2
J(p,t)\,\dr t\,\dr\sigma(p).
\]
Changing variables $t=\eps_n s$, with $s\in(-1,1)$, yields
\[
I_n=
L\int_{-1}^{1}\varphi(s)
\left(
\int_\Scal
|\gamma_{\eps_ns}(u)(p)|^2
J(p,\eps_n s)\,\dr\sigma(p)
\right)\dr s.
\]

We now analyze the inner integral. For $|t|<\delta$, the trace operators on the parallel surfaces $\Scal_t:=\{\,\Phi(p,t)\mid p\in\Scal\,\}$ are uniformly bounded from $\Hcal^1(\R^3)$ into $\LL^2(\Scal)$. Moreover, in tubular coordinates, the traces depend continuously on $t$. Hence, for every fixed $s\in(-1,1)$,
\[
\gamma_{\eps_ns}(u)\to\gamma_D(u)
\qquad
\text{in }\LL^2(\Scal),
\]
as $n\to\infty$. Since also
\[
J(p,\eps_n s)\to1
\]
uniformly in $(p,s)\in\Scal\times[-1,1]$, it follows that
\[
\lim_{n\to\infty}
\int_\Scal
|\gamma_{\eps_ns}(u)(p)|^2
J(p,\eps_n s)\,\dr\sigma(p)
=
\int_\Scal|\gamma_D(u)(p)|^2\,\dr\sigma(p).
\]

Furthermore, the family of trace operators on $\Scal_t$, $|t|<\delta$, is uniformly bounded, so the inner integral is bounded uniformly in $n$ and $s$. Since $\varphi\in \LL^1(-1,1)$, the Dominated Convergence Theorem applies and gives
\begin{align*}
\lim_{n\to\infty} I_n
&=
L\int_{-1}^{1}\varphi(s)
\left(
\int_\Scal|\gamma_D(u)|^2\,\dr\sigma
\right)\dr s \\
&=
L\left(\int_{-1}^{1}\varphi(s)\,\dr s\right)
\int_\Scal|\gamma_D(u)|^2\,\dr\sigma.
\end{align*}
Using the normalization $
\int_{-1}^{1}\varphi(s)\,\dr s=1,
$ 
we conclude that
\[
\lim_{n\to\infty} I_n
=
L\int_\Scal|\gamma_D(u)|^2\,\dr\sigma.
\]

Therefore,
\[
\limsup_{n\to\infty} q_n^{(L)}(u_n)
=
\lim_{n\to\infty} q_n^{(L)}(u)
=
q^{(L)}(u),
\]
which proves the limsup inequality.

\subsection{Proof of Proposition~\ref{proptildehmDiric}}

We establish the combined $\Gamma$-convergence of the quadratic forms $\tilde{q}_m^{(L)}$ to the Dirichlet form $q_D$ in the exterior, which is given by
\[
q_D(u) = \begin{cases} 
\displaystyle \int_{\Scal'} |(\nabla+i\bA)u|^2\,\dr \bx, & u \in \Hcal_0^1(\Scal') , \\
+\infty, & \text{otherwise},
\end{cases}
\] 
and recall that the functions~$u$ are regarded as elements of~$\LL^2(\R^3)$, extended by zero to~$\Scal^\circ$.

\subsection* {Verification of ($\Gamma1$).}
Let $u_m \rightharpoonup u$ weakly in $\LL^2(\R^3)$. We must show that $\liminf_{m \to \infty} \tilde{q}_m^{(L)}(u_m) \ge q_D(u)$. Without loss, suppose that 
\[
\liminf_{m \to \infty} \tilde{q}_m^{(L)}(u_m) < \infty.
\] 
Passing to a subsequence, we assume that the limit exists and is bounded. As in the proof of Theorem~\ref{thmLimRobinMag}, one gets $u = 0$ almost everywhere in~$\Scal^\circ$. 

Since $(u_m)$ is weakly convergent in $\LL^2(\R^3)$, it is uniformly bounded in the $\LL^2$-norm. Combined with the boundedness of the magnetic kinetic energy and the fact that the magnetic potential $\bA$ is bounded, it follows that the standard gradient is also bounded. Thus, $(u_m)$ is bounded in $\Hcal^1(\R^3)$ and converges weakly to $u$ in $\Hcal^1(\R^3)$ (and thus $u \in \Hcal^1(\R^3)$). By the continuity of the trace operator $\gamma_D: \Hcal^1(\R^3) \to \Hcal^{1/2}(\Scal)$ and the compact embedding of the trace into $\LL^2(\Scal)$, the traces converge strongly
\[
\gamma_D(u_m) \to \gamma_D(u) \quad \text{in } \LL^2(\Scal).
\]

Since $u=0$ a.e.\ in~$\Scal^\circ$ and $u\in\Hcal^1(\R^3)$ has a single-valued trace on~$\Scal$ (the traces obtained from~$\Scal^\circ$ and from~$\Scal'$ must coincide), it follows that $\gamma_D(u)=0$; hence $u|_{\Scal'}\in\Hcal_0^1(\Scal')$, so $u\in\dom q_D$. Consequently
\[
L\int_{\Scal}|\gamma_D(u_m)|^2\,\dr\sigma\rightarrow L\int_{\Scal}|\gamma_D(u)|^2\,\dr\sigma=0
\qquad(m\to\infty),
\]
regardless of the sign of~$L$. Since $(u_m)$ is bounded in $\Hcal^1(\R^3)$ and $u_m\rightharpoonup u$ weakly, the magnetic kinetic energy is weakly lower semicontinuous,
\[
\int_{\R^3}|(\nabla+i\bA)u|^2\,\dr\bx\le\liminf_{m\to\infty}\int_{\R^3}|(\nabla+i\bA)u_m|^2\,\dr\bx,
\]
and since $u=0$ a.e.\ in~$\Scal^\circ$, the left-hand side equals $\int_{\Scal'}|(\nabla+i\bA)u|^2\,\dr\bx=q_D(u)$. Dropping the nonnegative barrier term $m\int_{\Scal^\circ}|u_m|^2\,\dr\bx\ge0$, and combining the last two displayed limits (the boundary term having an actual limit, not merely a liminf),
\[
\liminf_{m\to\infty}\tilde{q}_m^{(L)}(u_m)
\ge
\liminf_{m\to\infty}\int_{\R^3}|(\nabla+i\bA)u_m|^2\,\dr\bx
+
\lim_{m\to\infty} L\int_{\Scal}|\gamma_D(u_m)|^2\,\dr\sigma
\ge
q_D(u),
\]
which proves $(\Gamma1)$.

\subsection* {Verification of ($\Gamma2$).}
Let $u \in \dom q_D$. This means $u \in \Hcal_0^1(\Scal')$ and $u=0$ in~$\Scal^\circ$. We must find a recovery sequence $u_m \to u$ such that $\limsup \tilde{q}_m^{(L)}(u_m) \le q_D(u)$.
We choose the constant sequence $u_m = u$ for all $m$.
\begin{itemize}
 \item Since $u \in \Hcal_0^1(\Scal')$, we can naturally extend it by zero to the interior region, so $u \in \Hcal^1(\R^3)$. Thus $u_m \in \dom \tilde{q}_m^{(L)}$.
 \item The interior integral vanishes, so the term $m \int_{\Scal^\circ} |u|^2 \dr \bx = 0$.
 \item The trace vanishes: $\gamma_D(u) = 0$ on~$\Scal$, so the Robin term $L \int_{\Scal} |\gamma_D(u)|^2 \dr \sigma = 0$.
\end{itemize}
Substituting into the form, it follows that
\[
\tilde{q}_m^{(L)}(u) = \int_{\Scal'} |(\nabla+i\bA)u|^2 \dr \bx + 0 + 0 = q_D(u).
\]
Thus, $\lim_{m \to \infty} \tilde{q}_m^{(L)}(u_m) = q_D(u)$, confirming the $(\Gamma2)$ condition. The proof of the proposition is complete.

\section{Stability of $H^{(L)}$}\label{sectNeumannL0}
We apply an abstract perturbation result (Theorem~\ref{thmAbstractNorm}) that appears in~\cite{Verri2012} to prove the stability of the operator $H^{(L)}$ with respect to the parameter $L$ (Proposition~\ref{propStabiliHL}).

\begin{theorem}[Theorem 1 in \cite{Verri2012}] \label{thmAbstractNorm}
 Let $a_\epsilon$ and $q_\epsilon$ be two sequences of positive closed sesquilinear forms in a Hilbert space $\mathcal{H}$ with common domain $\mathcal{D}$. Let $A_\epsilon$ and $Q_\epsilon$ be the associated self-adjoint operators. Suppose that there exists $\lambda > 0$ such that $a_\epsilon, q_\epsilon \ge \lambda$ and
 \begin{equation} \label{eq:condition}
 |a_\epsilon(\psi) - q_\epsilon(\psi)| \le \delta(\epsilon) q_\epsilon(\psi), \quad \forall \psi \in \mathcal{D},
 \end{equation}
 where $\delta(\epsilon) \to 0$ as $\epsilon \to 0$. Then, for $\epsilon$ small enough, there exists $K > 0$ such that, for the resolvent operators,
 \[
 \norm{A_\epsilon^{-1} - Q_\epsilon^{-1}} \le K \delta(\epsilon).
 \]
\end{theorem}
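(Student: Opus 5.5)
The plan is a duality argument: express the matrix elements of $A_\epsilon^{-1}-Q_\epsilon^{-1}$ through the difference of the forms, and then bound them using~\eqref{eq:condition}. Throughout, the forms are antilinear in the first entry, as in the paper. Write $d_\epsilon=a_\epsilon-q_\epsilon$, which is a hermitian sesquilinear form on $\mathcal D$. Fix $f,g\in\mathcal H$ and set $u=A_\epsilon^{-1}f$ and $v=Q_\epsilon^{-1}g$; both lie in $\mathcal D$ because operator domains sit inside form domains.

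By the First Representation Theorem, $a_\epsilon(w,u)=\langle w,f\rangle$ and $q_\epsilon(w,v)=\langle w,g\rangle$ for all $w\in\mathcal D$. Taking $w=v$ in the first identity and $w=u$ in the second gives
\[
\langle Q_\epsilon^{-1}g,f\rangle=a_\epsilon(v,u),\qquad \langle g,A_\epsilon^{-1}f\rangle=\overline{\langle A_\epsilon^{-1}f,g\rangle}=\overline{q_\epsilon(u,v)}=q_\epsilon(v,u).
\]
Using the self-adjointness of $Q_\epsilon^{-1}$, this yields
\[
\langle g,(A_\epsilon^{-1}-Q_\epsilon^{-1})f\rangle=q_\epsilon(v,u)-a_\epsilon(v,u)=-d_\epsilon(v,u).
\]

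The key step, and the only real obstacle, is to upgrade the quadratic bound~\eqref{eq:condition} to the sesquilinear bound
\[
|d_\epsilon(v,u)|\le \delta(\epsilon)\,q_\epsilon(v)^{1/2}q_\epsilon(u)^{1/2}.
\]
I would argue as follows. Multiplying $u$ by a phase, we may assume $d_\epsilon(v,u)$ is real. Polarization for the hermitian form $d_\epsilon$ then gives
\[
4\,\mathrm{Re}\,d_\epsilon(v,u)=d_\epsilon(v+u)-d_\epsilon(v-u).
\]
Applying~\eqref{eq:condition} to each term bounds this by $\delta\,[q_\epsilon(v+u)+q_\epsilon(v-u)]=2\delta\,[q_\epsilon(v)+q_\epsilon(u)]$ (parallelogram law). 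Finally, replacing $v\mapsto tv$ and $u\mapsto u/t$ and optimizing over $t>0$ turns $q_\epsilon(v)+q_\epsilon(u)$ into $2\,q_\epsilon(v)^{1/2}q_\epsilon(u)^{1/2}$, which gives the claim.

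It remains to control the two form values by $\|f\|$ and $\|g\|$. Since $q_\epsilon\ge\lambda$, we have $\|Q_\epsilon^{-1}\|\le 1/\lambda$, so
\[
q_\epsilon(v)=\langle v,g\rangle\le\|g\|^2/\lambda .
\]
For $u$, condition~\eqref{eq:condition} gives $(1-\delta)\,q_\epsilon(u)\le a_\epsilon(u)$. Moreover $a_\epsilon(u)=\langle u,f\rangle\le\|f\|^2/\lambda$, using $a_\epsilon\ge\lambda$. For $\epsilon$ small enough that $\delta(\epsilon)\le 1/2$, these combine to
\[
|\langle g,(A_\epsilon^{-1}-Q_\epsilon^{-1})f\rangle|\le \frac{\delta(\epsilon)}{\lambda\sqrt{1-\delta(\epsilon)}}\,\|f\|\,\|g\|\le \frac{\sqrt2}{\lambda}\,\delta(\epsilon)\,\|f\|\,\|g\|.
\]
Taking the supremum over unit vectors $f,g$ gives the asserted estimate with $K=\sqrt2/\lambda$.
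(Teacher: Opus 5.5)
Your proof is correct. The paper does not prove this statement: it quotes it from \cite{Verri2012} and uses it as a black box in Proposition~\ref{propStabiliHL}, so there is no in-text argument to match yours against.

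Your route has three steps. First, you write $\langle g,(A_\epsilon^{-1}-Q_\epsilon^{-1})f\rangle=-d_\epsilon(Q_\epsilon^{-1}g,A_\epsilon^{-1}f)$. Second, you upgrade the quadratic hypothesis to $|d_\epsilon(v,u)|\le\delta(\epsilon)\,q_\epsilon(v)^{1/2}q_\epsilon(u)^{1/2}$ by phase rotation, polarization of the hermitian form $d_\epsilon$, the parallelogram law, and the scaling $v\mapsto tv$, $u\mapsto u/t$. Third, you use $q_\epsilon(Q_\epsilon^{-1}g)\le\|g\|^2/\lambda$ and $(1-\delta)\,q_\epsilon(A_\epsilon^{-1}f)\le a_\epsilon(A_\epsilon^{-1}f)\le\|f\|^2/\lambda$. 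All identities are consistent with the antilinear-in-the-first-slot convention, and the case $u=0$ or $v=0$ is trivial.

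A shorter alternative is available. The hypothesis says $(1-\delta)q_\epsilon\le a_\epsilon\le(1+\delta)q_\epsilon$ on the common domain. Since inversion reverses the form order of strictly positive self-adjoint operators, this gives $-\frac{\delta}{1+\delta}Q_\epsilon^{-1}\le A_\epsilon^{-1}-Q_\epsilon^{-1}\le\frac{\delta}{1-\delta}Q_\epsilon^{-1}$. The difference is bounded and self-adjoint, so its norm equals its numerical radius, and hence $\|A_\epsilon^{-1}-Q_\epsilon^{-1}\|\le\delta/((1-\delta)\lambda)$. That argument is essentially one line but relies on the operator-monotonicity of inversion. Yours uses only the First Representation Theorem and elementary algebra of hermitian forms, and gives the slightly better constant $1/(\lambda\sqrt{1-\delta})$.

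Your proof also makes explicit that $\delta(\epsilon)\to0$ plays no role beyond $\delta<1$. For a single pair of forms one already has $\|A^{-1}-Q^{-1}\|\le\delta/(\lambda\sqrt{1-\delta})$. This is precisely what justifies the paper's use of the theorem with $\delta(L)=K|L-L_0|$ for $L$ near $L_0$.
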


\begin{proposition}\label{propStabiliHL}
 Let $L,L_0 \ge 0$. The operator $H^{(L)}$ converges to $H^{(L_0)}$ in the norm resolvent sense as $L \to L_0$. Specifically, for $\mu > 0$, there is~$C>0$ so that
 \[
 \norm{(H^{(L)} + \mu)^{-1} - (H^{(L_0)} + \mu)^{-1}} \le C |L - L_0|.
 \]
\end{proposition}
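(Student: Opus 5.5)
We apply Theorem~\ref{thmAbstractNorm} to the shifted forms on the common domain $\mathcal D=\Hcal^1(\Scal')$. Set
\[
a_L(\psi)=b^{(L)}(\psi)+\mu\|\psi\|^2,\qquad q_{L_0}(\psi)=b^{(L_0)}(\psi)+\mu\|\psi\|^2 .
\]
Here $\epsilon$ is replaced by the parameter $L\to L_0$. Since $L,L_0\ge0$, both forms are closed and positive on $\Hcal^1(\Scal')$, with $a_L,q_{L_0}\ge\mu$. So the lower bound hypothesis holds with $\lambda=\mu$, and the associated operators are $H^{(L)}+\mu$ and $H^{(L_0)}+\mu$.

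The key step is the relative bound. For $\psi\in\Hcal^1(\Scal')$,
\[
|a_L(\psi)-q_{L_0}(\psi)|=|L-L_0|\int_\Scal|\gamma_D\psi|^2\,\dr\sigma .
\]
We bound the boundary integral by the form $q_{L_0}$. The trace theorem gives $\|\gamma_D\psi\|_{\LL^2(\Scal)}^2\le C_T\|\psi\|_{\Hcal^1(\Scal')}^2$. The diamagnetic-type estimate $|\nabla\psi|^2\le2|(\nabla+i\bA)\psi|^2+2\|\bA\|_\infty^2|\psi|^2$, already used in the proof of Theorem~\ref{thmLimRobinMag}, then gives
\[
\|\psi\|_{\Hcal^1}^2\le 2\|(\nabla+i\bA)\psi\|^2+(1+2\|\bA\|_\infty^2)\|\psi\|^2\le C_\mu\, q_{L_0}(\psi),
\]
with $C_\mu=\max\{2,(1+2\|\bA\|_\infty^2)/\mu\}$. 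This uses $L_0\ge0$, so the Robin term in $q_{L_0}$ can simply be dropped. Consequently~\eqref{eq:condition} holds with $\delta=C_TC_\mu|L-L_0|\to0$.

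Theorem~\ref{thmAbstractNorm} then yields
\[
\norm{(H^{(L)}+\mu)^{-1}-(H^{(L_0)}+\mu)^{-1}}\le K\,C_TC_\mu|L-L_0| .
\]
Norm resolvent convergence at one point $-\mu$ of the common resolvent set implies it everywhere.

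The main obstacle is that the constant $K$ produced by Theorem~\ref{thmAbstractNorm} must not depend on $L$. If that theorem's proof is opaque on this point, we would instead verify the bound directly. Write $A=H^{(L)}+\mu$ and $Q=H^{(L_0)}+\mu$. For $f,g\in\LL^2$, set $\psi=A^{-1}f$ and $\phi=Q^{-1}g$. Then
\[
\langle f,Q^{-1}g\rangle-\langle A^{-1}f,g\rangle=a_L(\psi,\phi)-q_{L_0}(\psi,\phi)=(L-L_0)\int_\Scal\overline{\gamma_D\psi}\,\gamma_D\phi\,\dr\sigma .
\]
By Cauchy--Schwarz and the trace bound above this is at most $C|L-L_0|\,q_{L_0}(\psi)^{1/2}q_{L_0}(\phi)^{1/2}$. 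We have $q_{L_0}(\phi)=\langle g,Q^{-1}g\rangle\le\mu^{-1}\|g\|^2$. Since $L\ge0$, $q_{L_0}(\psi)\le a_L(\psi)+|L-L_0|\,\|\gamma_D\psi\|^2$, so for $|L-L_0|$ small $q_{L_0}(\psi)\le 2a_L(\psi)\le2\mu^{-1}\|f\|^2$. This gives the estimate with an explicit $C$ depending only on $\mu$, $\|\bA\|_\infty$ and the trace constant.
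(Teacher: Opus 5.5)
Your proposal is correct and follows the paper's proof. Both apply Theorem~\ref{thmAbstractNorm} to the $\mu$-shifted forms on $\Hcal^1(\Scal')$ and bound $|L-L_0|\,\|\gamma_D\psi\|_{\LL^2(\Scal)}^2$ using the trace theorem and the $\Hcal^1$-coercivity of $b^{(L_0)}+\mu\|\cdot\|^2$, dropping the Robin term because $L_0\ge0$. Your extra direct argument via $\langle f,(Q^{-1}-A^{-1})g\rangle=(L-L_0)\int_\Scal\overline{\gamma_D\psi}\,\gamma_D\phi\,\dr\sigma$ is also sound, and it shows explicitly that the constant does not depend on $L$, which the paper leaves to the abstract theorem. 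Its smallness restriction on $|L-L_0|$ can even be dropped: since $L\ge0$, you get $\|\gamma_D\psi\|^2\le C_TC_\mu\, a_L(\psi)$ directly.
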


\begin{proof}
 We verify the conditions of Theorem~\ref{thmAbstractNorm} (the precise convergence to zero in $\eps\to0$ is not relevant). The Hilbert space is $\mathcal{H} = \LL^2(\Scal')$ and the common form domain is $\mathcal{D} = \Hcal^1(\Scal')$. Note that for $\mu > 0$ the shifted quadratic form 
 \[
 g^{(L)}(u) := b^{(L)}(u) + \mu \norm{u}^2 = \int_{\Scal'} |(\nabla+i\mathbf{A})u|^2 \,\dr\bx + L \int_{\Scal} |u|^2 \,\dr\sigma + \mu \norm{u}^2
 \]
 is strictly positive $g^{(L)}\ge\mu$, and the same holds for $g^{(L_0)}(u)$. Consider the difference between the forms for any $u \in \Hcal^1(\Scal')$,
\[
|g^{(L)}(u) - g^{(L_0)}(u)| = |L - L_0| \int_{\Scal} |u|^2 \dr \sigma = |L - L_0|\, \norm{u}_{\LL^2(\Scal)}^2.
\]
To apply Theorem~\ref{thmAbstractNorm}, we must bound this term by the quadratic form $g^{(L_0)}(u)$. By the Trace Theorem, the boundary $\LL^2$-norm is controlled by the $\Hcal^1$-Sobolev norm. Additionally, because the magnetic potential $\bA$ is bounded, the shifted form $g^{(L_0)}$ is coercive with respect to the standard $\Hcal^1(\Scal')$ norm. Indeed, dropping the non-negative boundary term (since $L_0 \ge 0$), one has $g^{(L_0)}(u) \ge \norm{(\nabla+i\bA)u}^2 + \mu \norm{u}^2$, which is equivalent to the standard $\Hcal^1$ norm. Thus, there exists a constant $K > 0$ such that
\[
\norm{u}_{\LL^2(\Scal)}^2 \le C_{\mathrm{tr}} \norm{u}_{\Hcal^1(\Scal')}^2 \le K \, g^{(L_0)}(u).
\]
Substituting this yields the required estimate
\[
|g^{(L)}(u) - g^{(L_0)}(u)| \le |L - L_0| K \, g^{(L_0)}(u).
\] 
This satisfies condition \eqref{eq:condition} with $\delta(L) = K |L - L_0|$. Applying Theorem~\ref{thmAbstractNorm}, we conclude the proof.
\end{proof}

\begin{remark}
By considering $\mu>0$ large enough, it is possible to generalize Proposition~\ref{propStabiliHL} to cover all values of $L,L_0\in\mathbb R$. Indeed, if $L$ or $L_0$ is negative, the boundary term in the quadratic form $b^{(L)}$ is negative, and the form is no longer positive definite. However, as demonstrated in Appendix~\ref{appActHL}, the form $b^{(L)}$ is uniformly bounded from below by some constant $\beta < 0$. By choosing a shift $\mu > |\beta|$ sufficiently large (depending on the maximum absolute values of $L$ and $L_0$), the shifted forms $g^{(L)}(u) = b^{(L)}(u) + \mu \norm{u}^2$ and $g^{(L_0)}(u)$ become strictly positive, satisfying $g^{(L)}, g^{(L_0)} \ge \lambda > 0$. Since this shift restores the coercivity of $g^{(L_0)}$ with respect to the $\Hcal^1(\Scal')$ norm, the trace inequality bound $|g^{(L)}(u) - g^{(L_0)}(u)| \le |L - L_0| K \, g^{(L_0)}(u)$ holds just as in the positive case. Theorem~\ref{thmAbstractNorm} can then be applied to the shifted operators, yielding the norm resolvent convergence.
\end{remark}

\appendix

\section{The operator $H^{(L)}$}\label{appActHL}
 Let's see how the magnetic Robin operator $H^{(L)}$ follows from the quadratic form~\eqref{eqDefQFRobin}. Keep the notation $b^{(L)}$ for the corresponding sesquilinear form $b^{(L)}(u,v)$. But first we show that, for any $L \in \R$, the form $b^{(L)}$ is closed and lower bounded. We present the proof in a more general setting for which $\bA$ is supposed to be only locally bounded.

\subsection*{Lower Boundedness}
We need to show that there exists a constant $\beta \in \R$ such that $b^{(L)}(u) \geq \beta \norm{u}_{\LL^2(\Scal')}^2$, for all $u \in \Hcal^1(\Scal')$.

If $L \geq 0$, the boundary term is nonnegative, and the result is trivial with $\beta = 0$. Consider then the case $L < 0$.

Since~$\Scal$ is a smooth compact boundary, the Dirichlet trace operator ${\gamma_D}: \Hcal^1(\Scal') \to \LL^2(\Scal)$ is bounded. Moreover, for any $\eps > 0$, the following trace inequality holds (often derived locally by integrating along the normal vector in a tubular neighborhood of the boundary; see, for example, Theorem~1.5.1.10 in~\cite{grisvardBook}, which applies here because the boundary~$\Scal$ is compact)
\begin{equation} \label{eq:trace_ineq}
 \int_{\Scal} |\gamma_D(u)|^2 {\dr \sigma}\leq \eps \int_{\Scal'} |\nabla |u||^2 {\dr \bx} + C_\eps \int_{\Scal'} |u|^2 {\dr \bx},
\end{equation}
where $C_\eps$ is a constant depending on $\eps$ and the geometry of~$\Scal$.

We invoke the diamagnetic inequality~\cite[Section 7.21]{LiebLoss1997}, which holds for more general $\bA$ with components in~$\LL^2_{\mathrm{loc}}(\mathcal S')$ and for $u\in\LL^2(\mathcal S')$ with $(\nabla + i \bA)u\in \LL^2(\mathcal S')$, and it states that for almost every~$\bx$,
\[
|\nabla |u|(\bx)| \leq |(\nabla + i\bA)u(\bx)|.
\]
Substituting this into \eqref{eq:trace_ineq}, we obtain
\[
\int_{\Scal} |\gamma_D(u)|^2 {\dr \sigma}\leq \eps \norm{(\nabla + i\bA)u}_{\LL^2(\Scal')}^2 + C_\eps \norm{u}_{\LL^2(\Scal')}^2.
\]
Now, estimating the form $b^{(L)}(u)$, for $L < 0$,
\begin{align*}
 b^{(L)}(u) &= \norm{(\nabla+i\bA)u}^2 - |L| \int_{\Scal} |\gamma_D(u)|^2 {\dr \sigma}\\
 &\geq \norm{(\nabla+i\bA)u}^2 - |L| \left( \eps \norm{(\nabla+i\bA)u}^2 + C_\eps \norm{u}^2 \right) \\
 &= (1 - |L|\eps)\, \norm{(\nabla+i\bA)u}^2 - |L| C_\eps \norm{u}^2.
\end{align*}
We choose $\eps$ sufficiently small such that $|L|\eps < 1$ so that the coefficient of the kinetic term is positive. Dropping the positive kinetic term, one gets
\[
b^{(L)}(u) \geq - |L| C_\eps \norm{u}_{\LL^2(\Scal')}^2.
\]
Thus, the form is bounded from below with $\beta = -|L|C_\eps$.

\subsection*{Closedness}
A lower bounded form $b^{(L)}$ is closed if the domain $D(b^{(L)}) = \Hcal^1(\Scal')$ is complete with respect to the form norm defined by~\cite{ISTQD}
\[
\norm{u}_{b^{(L)}} := \left( b^{(L)}(u) + (1 - \beta) \norm{u}_{\LL^2(\Scal')}^2 \right)^{1/2}.
\]
Since $\Hcal^1(\Scal')$ is complete under the standard Sobolev norm $\norm{\cdot}_{\Hcal^1}$, it suffices to show that $\norm{u}_{b^{(L)}}$ is equivalent to $\norm{u}_{\Hcal^1(\Scal')_A}$, where $\norm{u}_{\Hcal^1_A}^2 = \norm{(\nabla+i\bA)u}^2 + \norm{u}^2$. Note that if $\bA$ is bounded, this is equivalent to the standard $\Hcal^1$ norm.

{\em Upper bound.} Using the continuity of the trace map, there exists $C_{\mathrm{tr}} > 0$ such that $\int_{\Scal} |\gamma_D(u)|^2 {\dr \sigma}\leq C_{\mathrm{tr}} \norm{u}_{\Hcal^1_A}^2$. Thus
\[
b^{(L)}(u) \leq \norm{(\nabla+i\bA)u}^2 + |L| C_{\mathrm{tr}} \norm{u}_{\Hcal^1_A}^2 \leq (1 + |L|C_{\mathrm{tr}}) \norm{u}_{\Hcal^1_A}^2.
\]
 Adding the $(1-\beta) \norm{u}^2$ term implies $\norm{u}_{b^{(L)}} \lesssim \norm{u}_{\Hcal^1_A}$.

{\em Lower bound.} Using the estimate from the lower boundedness proof with the specific choice of $\eps$ such that $(1 - |L|\eps) > 0$, one gets
\begin{align*}
 b^{(L)}(u) + (1-\beta) \norm{u}^2 &\geq (1-|L|\eps)\norm{(\nabla+i\bA)u}^2 - |L|C_\eps \norm{u}^2 \\ &+ (1 + |L|C_\eps) \norm{u}^2 \\
 &\geq c \left( \norm{(\nabla+i\bA)u}^2 + \norm{u}^2 \right) \\
 &= c \norm{u}_{\Hcal^1_A}^2.
\end{align*}
Since the norms are equivalent and $\Hcal^1(\Scal')$ is a Banach space, the form is closed.

 \subsection*{Characterization of $H^{(L)}$} 
 Now we pass to the characterization of the associated self-adjoint operator. Let $u \in \dom H^{(L)} \subseteq \Hcal^1(\Scal')$ and let $v \in \Hcal^1(\Scal')$. The operator $H^{(L)}$ satisfies the relation
 \begin{equation} \label{eq:relation}
 b^{(L)}(u, v) = \langle H^{(L)}u, v \rangle_{\LL^2(\Scal')},
 \end{equation}
with
 \begin{equation} \label{eq:sesquilinear}
 b^{(L)}(u, v) = \int_{\Scal'}\overline{ (\nabla + i\bA)u} \cdot {(\nabla + i\bA)v}\,\dr \bx + L \int_{\Scal} \overline{\gamma_D(u)} {\gamma_D(v)}\,\dr \sigma.
 \end{equation}
 Let $\nabla_{\bA} = \nabla + i\bA$. Introduce the formal differential operator acting in the distribution sense on the domain~$\Scal'$ as $H_{\text{for}} = -\nabla_{\bA}^2$. We apply the magnetic Green's first identity (integration by parts) to the first term of \eqref{eq:sesquilinear}.
 
 Recall that the Divergence Theorem relates the integral over a domain to the integral over its boundary~$\Scal$ using the {outward} pointing normal, which here is the vector~$-\nu$.

 Green's identity gives
 \[
 \int_{\Scal'} \overline{\nabla_{\bA} u} \cdot {\nabla_{\bA} v}\,\dr \bx 
 = \int_{\Scal'} -\overline{(\nabla_{\bA}^2 u)} {v}\,\dr \bx + \int_{\Scal} \overline{(-\nu \cdot \nabla_{\bA} u)} {v}\,\dr \sigma.
\]
 Inserting this into the expression for $b^{(L)}(u, v)$, one finds
 \begin{align*}
 b^{(L)}(u, v) &= \int_{\Scal'} -\overline{(\nabla_{\bA}^2 u)} {v}\,\dr \bx - \int_{\Scal} \overline{(\nu \cdot \nabla_{\bA} u)} {\gamma_D(v)}\,\dr \sigma + L \int_{\Scal} \overline{\gamma_D(u)} {\gamma_D(v)}\,\dr \sigma \\
 &= \int_{\Scal'} \overline{H_{\text{for}} u} \, {v}\,\dr \bx + \int_{\Scal} \left(\overline{ L \gamma_D(u) - \nu \cdot \nabla_{\bA} u} \right) {\gamma_D(v)}\,\dr \sigma.
 \end{align*}
 For \eqref{eq:relation} to hold for all $v \in \Hcal^1(\Scal')$, the term $\langle H^{(L)}u, v \rangle_{\LL^2}$ must correspond to the volume integral $\int_{\Scal'} H_{\text{for}} u \, \overline{v}\,\dr \bx$. This implies two conditions:
 
 \begin{enumerate}
 \item The action of the operator is $H^{(L)} u = -(\nabla + i\bA)^2 u$ in~$\Scal'$.
 \item The boundary integral must vanish for all ``test functions'' $v$, that is,
 \[
 \int_{\Scal} \left( \overline{L \gamma_D(u) - \nu \cdot \nabla_{\bA} u} \right) {\gamma_D(v)}\,\dr \sigma = 0.
 \]
 \end{enumerate}
 Since the trace of $v$ on~$\Scal$ is arbitrary (the trace map from $\Hcal^1(\Scal')$ to $\Hcal^{1/2}(\Scal)$ is surjective), the term in the parenthesis must be zero almost everywhere on~$\Scal$. Thus, we obtain the magnetic Robin boundary condition:
 \[
 \nu \cdot (\nabla + i\bA)u = L \gamma_D(u) \quad \text{on } \Scal.
 \]

\section{The quadratic form of $h^{(L)}$}\label{appActOphL}
Let $u \in \dom h^{(L)}$ and $v \in \Hcal^1(\R^3)$ be a test function. Then
 \[
 q^{(L)}(u, v) = \langle h^{(L)}u, v \rangle_{\LL^2(\R^3)}.
\] The sesquilinear form is
 \[
 q^{(L)}(u, v) = \int_{\R^3} \overline{(\nabla + i\mathbf{A})u} \cdot {(\nabla + i\mathbf{A})v}\, \dr\bx + L \int_{\Scal} \overline{\gamma_D(u)} {\gamma_D(v)}\, \dr\sigma.
 \]
 We split the volume integral into the interior~$\Scal^\circ$ and the exterior~$\Scal'$. Let $\nabla_A = \nabla + i\mathbf{A}$. We apply Green's first identity on each subdomain.
 
 {\em Interior domain ($\Scal^\circ$).}
 The unit normal pointing {out} of~$\Scal^\circ$ is exactly $\nu$, so
 \[
 \int_{\Scal^\circ} \overline{\nabla_{\bA} u} \cdot {\nabla_{\bA} v}\, \dr\bx = \int_{\Scal^\circ} -\overline{(\nabla_{\bA}^2 u)} \, v\, \dr\bx + \int_{\Scal} \overline{(\nu \cdot \nabla_{\bA} u)}_- {\gamma_D(v)}\, \dr\sigma.
 \]
 Here, $(\cdot)_-$ denotes the trace taken from the interior.
 
 {\em Exterior domain ($\Scal'$).}
 The unit normal pointing {out} of~$\Scal'$ is $-\nu$, so
 \[
 \int_{\Scal'} \overline{\nabla_{\bA} u} \cdot {\nabla_{\bA} v}\, \dr\bx = \int_{\Scal'} -\overline{(\nabla_{\bA}^2 u)}\, {v}\, \dr\bx + \int_{\Scal} \overline{(-\nu \cdot \nabla_{\bA} u)}_+ {\gamma_D(v)}\, \dr\sigma.
 \]
 Here, $(\cdot)_+$ denotes the trace taken from the exterior.

 Adding the two contributions
 \begin{align*}
 \int_{\R^3} \overline{\nabla_{\bA} u} \cdot {\nabla_{\bA} v}\, \dr\bx &= \int_{\R^3 \setminus \Scal} -\overline{(\nabla_{\bA}^2 u)}\, {v}\, \dr\bx + \int_{\Scal} \overline{\left( (\nu \cdot \nabla_{\bA} u)_- - (\nu \cdot \nabla_{\bA} u)_+ \right)} \,{\gamma_D(v)}\, \dr\sigma \\
 &= \int_{\R^3 \setminus \Scal} -\overline{(\nabla_{\bA}^2 u)}\, {v}\, \dr\bx - \int_{\Scal}\left[ \overline{\nu \cdot(\nabla+i \mathbf{A}) u} \right]_{\Scal} {\gamma_D(v)}\, \dr\sigma.
 \end{align*}
 Substituting this into the expression for $q^{(L)}(u, v)$, one finds
 \begin{align*}
 q^{(L)}(u, v) &= \int_{\R^3 \setminus \Scal} -\overline{(\nabla_{\bA}^2 u)} \,{v}\, \dr\bx - \int_{\Scal}\left[ \overline{\nu \cdot(\nabla+i \mathbf{A}) u} \right]_{\Scal} \,{\gamma_D(v)}\, \dr\sigma + L \int_{\Scal} \overline{\gamma_D(u)} \,{\gamma_D(v)}\, \dr\sigma \\
 &= \langle -\nabla_{\bA}^2 u, v \rangle_{\LL^2(\R^3 \setminus \Scal)} + \int_{\Scal} \left(\overline{ L \gamma_D(u) -\left[ \nu \cdot(\nabla+i \mathbf{A}) u \right]_{\Scal}} \right) {\gamma_D(v)}\, \dr\sigma.
 \end{align*}
 For this to equal $\langle h^{(L)}u, v \rangle_{\LL^2(\R^3)}$ for all $v \in \Hcal^1(\R^3)$, the boundary term must vanish. Since the trace map $\gamma_D:\Hcal^1(\R^3) \to \Hcal^{1/2}(\Scal)$ is surjective, $\gamma_D(v)$ is arbitrary. Therefore,
 \[
 \left[ \nu \cdot(\nabla+i \mathbf{A}) u \right]_{\Scal} = L \gamma_D(u).
 \]

\

\noindent{\bf Acknowledgments.} The author  acknowledges partial financial support from CNPq (Brazilian National Council for Scientific and Technological Development) under Grant No.~303689/2021-8.

\end{document}